\newif\if@restonecol
\newtheorem{theorem}{\textbf{Theorem}}
\newtheorem{lemma}{\textbf{Lemma}}
\newtheorem{definition}{\textbf{Definition}}
\newtheorem{problem}{\textbf{Problem}}
\newtheorem{deduction}{\textbf{Deduction}}
\newcommand{\sift}{\texttt{SIFT}\xspace}
\newcommand{\siftTen}{\texttt{SIFT10M}\xspace}
\newcommand{\siftH}{\texttt{SIFT100M}\xspace}
\newcommand{\openai}{\texttt{OpenAI}\xspace}
\newcommand{\cohere}{\texttt{Cohere}\xspace}
\newcommand{\glove}{\texttt{GloVe}\xspace}
\newcommand{\gist}{\texttt{GIST}\xspace}
\newcommand{\nyt}{\texttt{NYTimes}\xspace}
\newcommand{\ti}{\texttt{Tiny5M}\xspace}
\newcommand{\ourMethod}{\texttt{TRIM}\xspace}
\newcommand{\ourDiskMethod}{\texttt{tDiskANN}\xspace}
\newcommand{\ourHNSW}{\texttt{tHNSW}\xspace}
\newcommand{\HNSWADS}{\texttt{HNSW\_ADS}\xspace}
\newcommand{\HNSWPCA}{\texttt{HNSW\_PCA}\xspace}
\newcommand{\HNSWOPQ}{\texttt{HNSW\_OPQ}\xspace}
\newcommand{\HNSWPQ}{\texttt{HNSW\_PQ}\xspace}
\newcommand{\HNSWRaBitQ}{\texttt{HNSW\_RaBitQ}\xspace}
\newcommand{\ourIVFPQ}{\texttt{tIVFPQ}\xspace}
\newcommand{\IVFRaBitQ}{\texttt{IVFRaBitQfs}\xspace}
\newcommand{\IVFPQfs}{\texttt{IVFPQfs}\xspace}
\newcommand{\ourIVFPQfs}{\texttt{tIVFPQfs}\xspace}
\newcommand{\Tribase}{\texttt{Tribase}\xspace}
\newcommand{\HNSW}{\texttt{HNSW}\xspace}
\newcommand{\IVFPQ}{\texttt{IVFPQ}\xspace}
\newcommand{\DiskANN}{\texttt{DiskANN}\xspace}
\newcommand{\Starling}{\texttt{Starling}\xspace}
\begin{document}






\title{TRIM: Accelerating High-Dimensional Vector Similarity Search with Enhanced Triangle-Inequality-Based Pruning}









\author{Yitong Song$^1$, Pengcheng Zhang$^1$, Chao Gao$^2$, Bin Yao$^1$, Kai Wang$^1$, Zongyuan Wu$^3$, Lin Qu$^4$}
\affiliation{
$\;^1$ Shanghai Jiao Tong University\quad
$\;^2$ Zilliz\quad
$\;^3$ Alibaba Group
$\;^4$ Taobao\\
\{yitong\_song, zhangpc, w.kai\}@sjtu.edu.cn, chao.gao@zilliz.com, yaobin@cs.sjtu.edu.cn, zongyuan.wuzy@alibaba-inc.com, xide.ql@taobao.com
\country{}
}

\renewcommand{\shortauthors}{Yitong Song et al.}


\begin{abstract}
  High-dimensional vector similarity search (HVSS) is critical for many data processing and AI applications. However, traditional HVSS methods often require extensive data access for distance calculations, leading to inefficiencies. Triangle-inequality-based lower bound pruning is a widely used technique to reduce the number of data access in low-dimensional spaces but becomes less effective in high-dimensional settings. This is attributed to the "distance concentration" phenomenon, where the lower bounds derived from the triangle inequality become too small to be useful. To address this, we propose \ourMethod, which enhances the effectiveness of traditional \textbf{\underline{tr}}iangle-\textbf{\underline{i}}nequality-based pruning in high-di\textbf{\underline{m}}ensional vector similarity search using two key ways: (1) optimizing landmark vectors used to form the triangles, and (2) relaxing the lower bounds derived from the triangle inequality, with the relaxation degree adjustable according to user's needs. \ourMethod is a versatile operation that can be seamlessly integrated into both memory-based (e.g., HNSW, IVFPQ) and disk-based (e.g., DiskANN) HVSS methods, reducing distance calculations and disk access. Extensive experiments show that \ourMethod enhances memory-based methods, improving graph-based search by up to 90\% and quantization-based search by up to 200\%, while achieving a pruning ratio of up to 99\%. It also reduces I/O costs by up to 58\% and improves efficiency by 102\% for disk-based methods, while preserving high query accuracy.
\end{abstract}

\begin{CCSXML}
<ccs2012>
 <concept>
  <concept_id>00000000.0000000.0000000</concept_id>
  <concept_desc>Do Not Use This Code, Generate the Correct Terms for Your Paper</concept_desc>
  <concept_significance>500</concept_significance>
 </concept>
 <concept>
  <concept_id>00000000.00000000.00000000</concept_id>
  <concept_desc>Do Not Use This Code, Generate the Correct Terms for Your Paper</concept_desc>
  <concept_significance>300</concept_significance>
 </concept>
 <concept>
  <concept_id>00000000.00000000.00000000</concept_id>
  <concept_desc>Do Not Use This Code, Generate the Correct Terms for Your Paper</concept_desc>
  <concept_significance>100</concept_significance>
 </concept>
 <concept>
  <concept_id>00000000.00000000.00000000</concept_id>
  <concept_desc>Do Not Use This Code, Generate the Correct Terms for Your Paper</concept_desc>
  <concept_significance>100</concept_significance>
 </concept>
</ccs2012>
\end{CCSXML}


\keywords{High-dimensional vector similarity search, Triangle inequality, Approximate $k$ nearest neighbor search, Approximate range search}

\received{20 February 2007}
\received[revised]{12 March 2009}
\received[accepted]{5 June 2009}

\maketitle

\section{Introduction}
Recently, high-dimensional vector similarity search (HVSS) plays a crucial role in fields such as information retrieval~\cite{ImageRetrieval}, data mining~\cite{DM}, recommendation systems~\cite{ADBV}, and AI applications~\cite{chatGPT}. It aims to find a set of vectors that are similar to a query vector $q$, which involves two main query types: $k$ nearest neighbor search~\cite{HNSW, NSG} ($k$NNS) and range search (RS)~\cite{starling}. However, due to the "curse of dimensionality"~\cite{curseOfDim}, computing exact answers for these queries is cost-prohibitive, which motivates the shift to their approximate versions: approximate $k$ nearest neighbor search (A$k$NNS) and approximate range search (ARS)~\cite{HNSW, diskann, starling}.

\begin{figure}[t]
    \centering
    \subfloat[A PG-based method]{
        \label{fig:calNum_HNSW}
        \includegraphics[width=0.35\columnwidth]{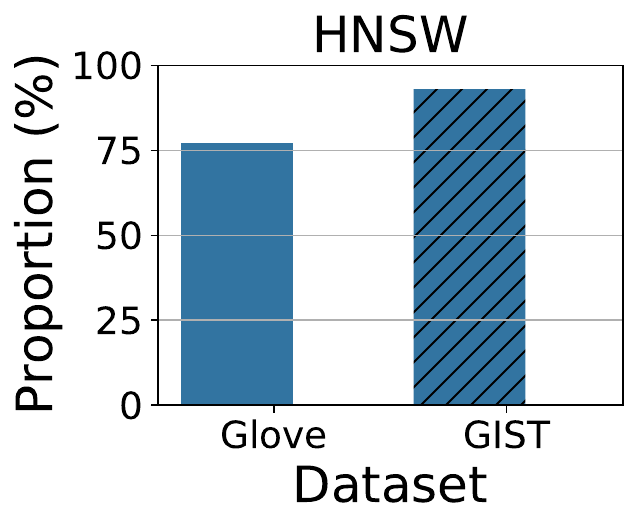}
        }
    \subfloat[A PQ-based method]
    {
        \label{fig:calNum_IVFPQ}       
        \includegraphics[width=0.35\columnwidth]{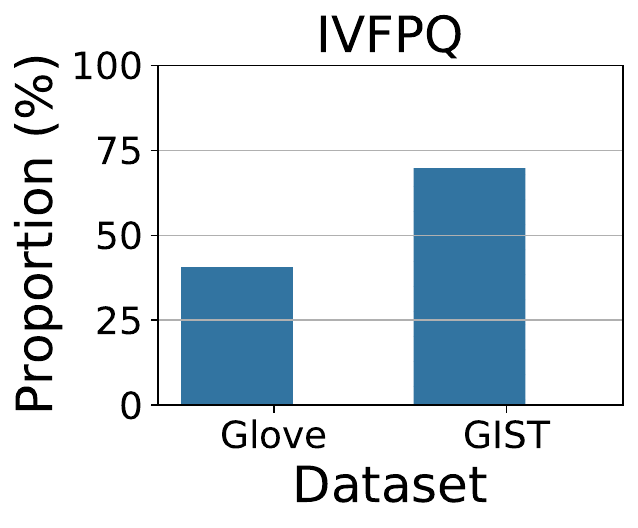}
    }
    \vspace{-0.2cm}
    \caption{Proportion of data access and distance calculations}
    \vspace{-0.2cm}
    \label{fig:proportion}
\end{figure}

To handle HVSS, proximity graph (PG)~\cite{HNSW, NSG} and product quantization (PQ)~\cite{pq} based methods have been developed, exhibiting superior performance and broad application~\cite{kANNSurvey, Milvus, meta-faiss, PandaDB}. The PG-based method organizes vector data into a graph structure, connecting each vector to its approximate nearest neighbors, and performs HVSS through greedy traversal on the graph. The PQ-based method encodes vectors into compact PQ codes, allowing rapid distance estimation to the query vector $q$ from these codes. It performs HVSS by identifying numerous vectors with the smallest estimated distances to $q$ as candidate results and refining them according to their exact distances. Despite their differing principles, both suffer from a common performance bottleneck: \textbf{\textit{accessing extensive data for distance calculations}}. As depicted in Figure~\ref{fig:proportion}, on two popular datasets, data access and distance calculations account for over 75\% of the query time in a typical PG-based method (i.e., HNSW~\cite{HNSW}) due to a large number of vectors traversed on the graph. For IVFPQ~\cite{meta-faiss}, a popular PQ-based method, this proportion averages around 50\%, primarily due to the refinement phase involving distance calculations for numerous candidates.

Recent studies have focused on reducing the distance calculation cost in HVSS using various distance comparison operations (DCOs)~\cite{DCO, DCO2, DCOBenchmark}. However, these methods mainly reduce vector dimensionality involved in distance calculations without addressing data access, leading to two key drawbacks: (1) The data access overhead remains significant both when data is stored in memory~\cite{LVQ} and on disk~\cite{diskann, starling}; (2) SIMD can inherently accelerate distance calculations, but these methods progressively test the dimensions used for the computation, which limits SIMD compatibility.

To address both data access and distance calculation overhead, triangle-inequality-based lower bound pruning is widely used in low-dimensional settings, greatly improving the efficiency of various queries by orders of magnitude~\cite{Tri1, Tri2, Tri3}. In this paper, we explore how to apply such triangle-inequality-based pruning in HVSS. Given a query $q$ and a data vector $x$, the lower bound $lb$ of the distance $\Gamma(q,x)$ is computed as $lb = \left|\Gamma(l, q) - \Gamma(l, x)\right|$, where $l$ is a landmark used to form the triangle $\triangle lqx$. This lower bound can be computed more efficiently than the exact distance, as $\Gamma(l, x)$ is pre-computed and $\Gamma(l, q)$ can be computed at low amortized cost (since it is computed once per query and reused across multiple data). If $lb$ exceeds a certain distance threshold during the HVSS query, $x$ can be pruned without the need for computing $\Gamma(q,x)$, as it is confirmed to be dissimilar to $q$. Larger lower bounds can enhance the pruning power by excluding more candidates.  

However, we observe that directly applying this approach in HVSS is ineffective due to the "\textit{distance concentration}" phenomenon in high-dimensional spaces~\cite{disConcen1, disConcen2}, which causes most vector pair distances to fall within a narrow range, making the lower bounds too small to be useful. As shown in Figure~\ref{fig:disConcen}, the pairwise distances on a real dataset \sift with dimension 128 are concentrated between 400 and 700, making the maximum possible lower bound only $700 - 400 = 300$, which approaches the minimum distance in the dataset and thus fails to enable effective pruning. Moreover, our simulations in Figure~\ref{fig:pruneRatio} confirm that the pruning effectiveness of the triangle-inequality-based lower bound declines sharply with dimensionality and becomes negligible beyond 32 dimensions. A common strategy to improve pruning is to use multiple landmarks selected from the dataset to compute a larger bound~\cite{yao2011, Tri1, Tri4, Tri6}, i.e., $lb = \max_{l \in L} \left|\Gamma(l, q) - \Gamma(l, x)\right|$. However, this practice introduces substantial overhead in lower bound calculations, offsetting its benefits of reducing computational overhead in HVSS. Therefore, applying triangle-inequality-based pruning in HVSS faces significant challenges in both obtaining sufficiently tight lower bounds for effective pruning and achieving efficient lower bound calculations.

\begin{figure}[t]
    \centering
    \begin{minipage}[b]{0.58\columnwidth}
        \centering
        \includegraphics[width=\columnwidth]{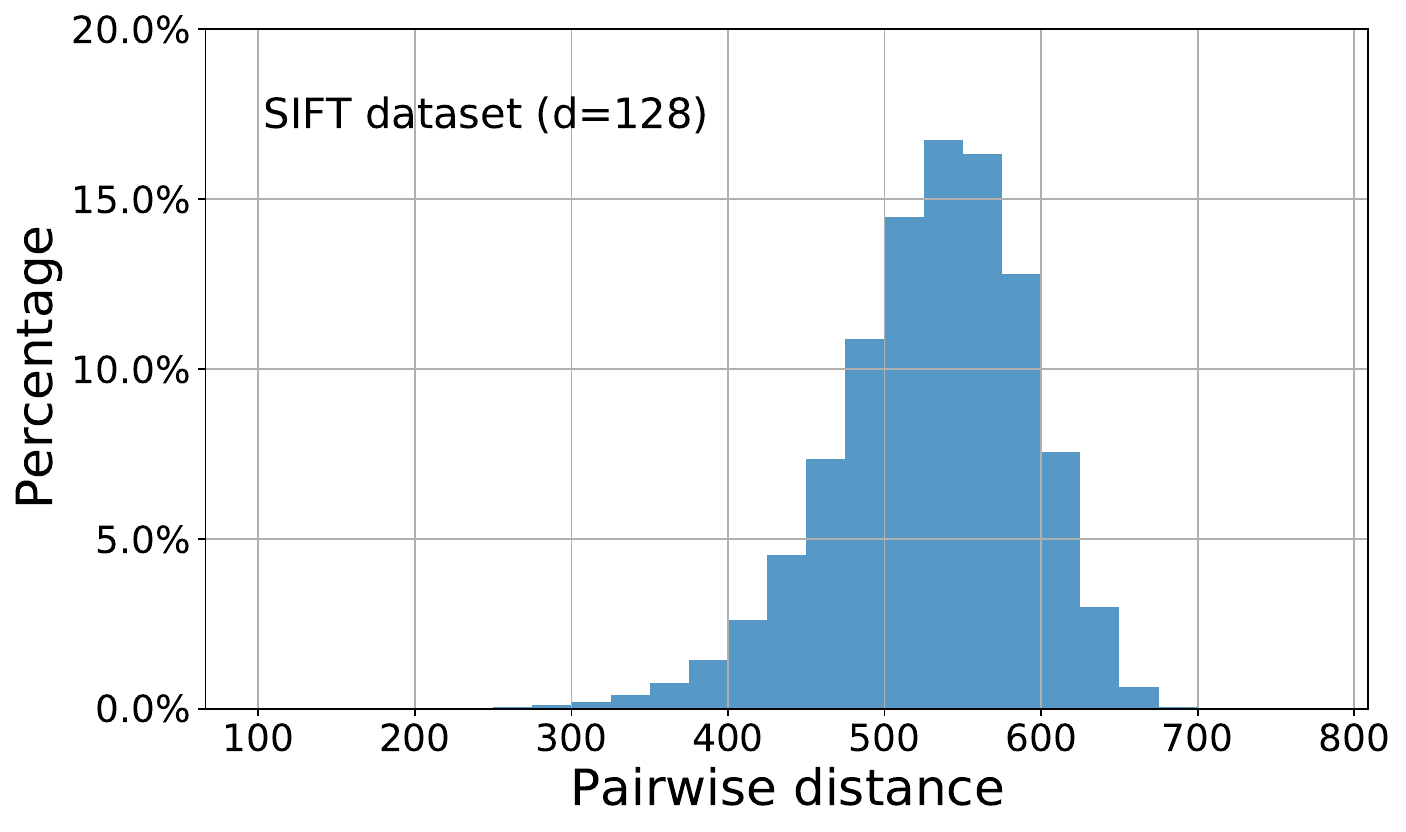}
        \vspace{-0.6cm}
        \caption{Distance distribution}
        \vspace{-0.4cm}
        \label{fig:disConcen}
    \end{minipage}
     \begin{minipage}[b]{0.39\columnwidth}
        \centering
        \includegraphics[width=\columnwidth]{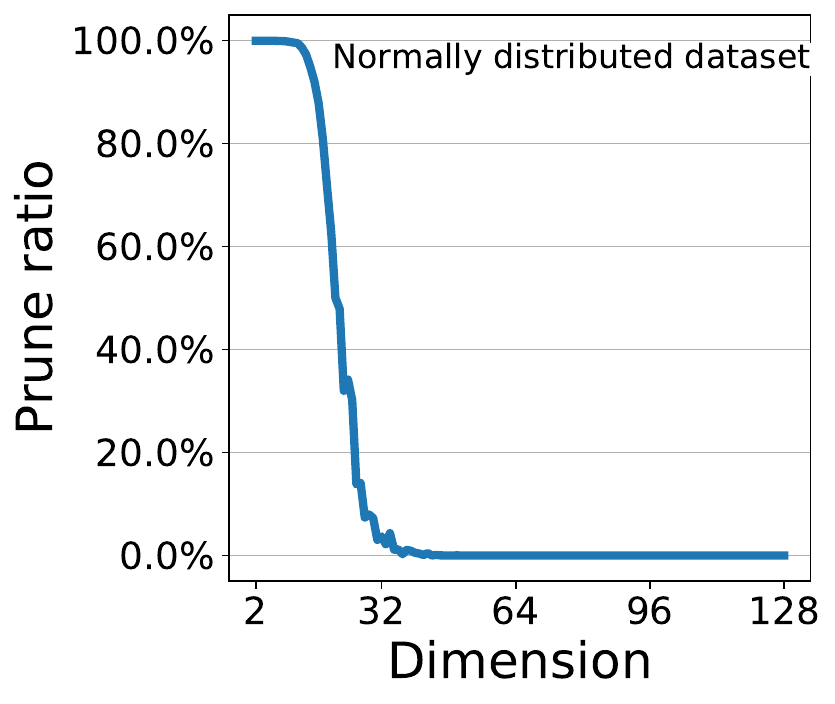}
        \vspace{-0.6cm}
        \caption{Pruning ratio}
        \vspace{-0.4cm}
        \label{fig:pruneRatio}
    \end{minipage}
\end{figure}

Motivated by above observations, this paper proposes \ourMethod, a versatile \textbf{\underline{tr}}iangle-\textbf{\underline{i}}nequality-based operation for high-di\textbf{\underline{m}}ensional similarity searches, which enhances the pruning of traditional triangle inequalities through two main improvements: (1) landmark optimization, and (2) $p$-relaxed lower bounds, as detailed below. 

\textit{Landmark optimization.} Given the fact that the lower bound derived from a dataset-selected landmark is affected by the distance concentration phenomenon and using multiple landmarks for calculations is inefficient, we propose \textit{generating} (rather than selecting) a \textit{single} (rather than multiple), \textit{optimized} landmark $l$ for each data vector $x$, aiming to maximizing the lower bound $lb$ of $\Gamma(q,x)$ while minimizing the computational cost of $lb$. To achieve both, we propose employing the vector represented by $x$'s PQ code as its landmark (rather than itself). For maximizing $lb$, our strategy places $l$ as close to $x$ as possible (since PQ minimizes the reconstruction error between $x$ and its PQ representation). When pruning a vector $x$ far from $q$, a landmark $l$ close to $x$ is also likely far from $q$, thus maximizing $|\Gamma(l, q) - \Gamma(l, x)|$. Ideally, when $\Gamma(l, x)$ is near 0, the lower bound approaches $\Gamma(l, q)$, enabling effective pruning. For minimizing the computational cost of $lb$, our method allows for rapid calculation of $lb$ by using one landmark per vector and leveraging the efficient distance calculation for $\Gamma(l, q)$ inherent in PQ, along with its SIMD compatibility. This method also offers extra benefits: (1) reduced storage costs for numerous landmarks due to the compact nature of PQ, and (2) seamless integration into HVSS systems, as PQ is widely used for vector compression and indexing.

\textit{$p$-Relaxed lower bounds.} To further improve pruning effectiveness in high-dimensional spaces, we introduce \textit{relaxed} lower bounds. We define a function $g(x,q,l)$ as a $p$-relaxed lower bound function ($p$-LBF) if it satisfies $P(g(x,q,l) \leq \Gamma(q,x)^2) = p$, meaning it has a confidence level of $p (0 \leq p \leq 1)$ to be below $\Gamma(q,x)^2$. Compared to the strict one, $p$-LBF provides larger lower bounds to improve pruning, though it also decreases query accuracy proportionally to $p$. We propose a simple yet effective $p$-LBF: $g(x,q,l) = (\Gamma(l, q) - \Gamma(l, x))^2 + 2\gamma \Gamma(l,x)\Gamma(l,q)$, enhancing pruning by adding a positive term $2\gamma \Gamma(l,x)\Gamma(l,q)$ to triangle inequality's strict lower bound. The factor $\gamma$ adjusts the degree of relaxation, with larger $\gamma$ yielding lower $p$. $\gamma$ can either be manually tuned or estimated using our proposed method. Our method establishes a correlation function between $\gamma$ and $p$ by combining theoretical analysis with empirical fitting, enabling $\gamma$ to be decided for any given $p$. Surprisingly, even with $p = 1$, we can still yield a large $\gamma$ value, greatly enhancing pruning with almost no loss of accuracy. Allowing for a slight reduction in accuracy can further boost pruning.

Given the simplicity and ease of integration of \ourMethod, we then explore its potential as a basic operation to accelerate popular HVSS solutions, including both memory-based PG and PQ series methods and disk-based methods. We thoroughly investigate how \ourMethod is integrated into each category of methods to enhance query efficiency, offering a comprehensive demonstration of its effectiveness via theoretical analysis and experimental validation. Notably, compared to other DCOs, \ourMethod excels in its \textit{simplicity}, \textit{versatility} (applicable to both memory- and disk-based A$k$NN and ARS queries), \textit{SIMD compatibility}, \textit{effectiveness} and \textit{efficiency}. Our key contributions are summarized as follows.

\begin{itemize}
    \item \textbf{Identifying challenges of applying the triangle inequality to HVSS.}
    We explore the application of triangle-inequality-based pruning to HVSS, and point out the inherent challenges both theoretically and experimentally (Section \ref{sec: preliminary}).
    
    \item \textbf{A versatile operation with improved pruning effect.} We propose \ourMethod to enhance the triangle-inequality-based pruning in HVSS through two key improvements: landmark optimization and $p$-relaxed lower bounds, offering a user-friendly way to tune the relaxation degree (Section \ref{sec: ourMethod}).
    
    \item \textbf{Enhancing memory-based HVSS methods.} We integrate \ourMethod into PG- and PQ-based methods for both A$k$NNS and ARS queries. Experiments show up to 90\% improvement in PG-based methods with a pruning ratio of 99\%, and up to 200\% enhancement in PQ-based methods (Sections \ref{sec: ForMem} and \ref{sec: exp}).
    
    \item \textbf{Enhancing disk-based HVSS methods.} We further integrate \ourMethod into disk-based HVSS methods and present a new method called \ourDiskMethod. By optimizing the data layout and query algorithm, \ourDiskMethod achieves a 58\% reduction in I/O cost and improves efficiency by 102\% (Sections \ref{sec: ForDisk} and \ref{sec: exp}).
\end{itemize}

\section{Preliminaries}
\label{sec: preliminary}

\subsection{Two Types of Queries in HVSS}
\label{sec: problemDef}
\textbf{$k$ Nearest Neighbor Search ($k$NNS)} is a fundamental query type for high-dimensional vectors. Given a finite set of vectors $\mathcal{D}$ in a $d$-dimensional Euclidean space $E^d$, a query vector $q \in E^d$, and an integer $k$, the goal is to identify $k$ vectors in $\mathcal{D}$ most similar to $q$. Formally, $k$NNS returns a subset $\mathcal{R}_{knn} \subseteq \mathcal{D}$ such that $|\mathcal{R}_{knn}| = k$, and $\forall_{ v \in \mathcal{R}_{knn}}\forall_{o \in \mathcal{D} \setminus \mathcal{R}_{knn}}\Gamma(v,q) \leq \Gamma(o,q)$, where $\Gamma(.,.)$ represents the Euclidean distance function.

Due to the inherent difficulty of exact $k$NNS in high-dimensional spaces~\cite{HNSW, NSG}, recent research~\cite{HNSW, NSG, diskann, starling, HVS, DCO} has focused on the approximate $k$NNS problem, known as A$k$NNS. By relaxing the requirement for exact results, these studies have significantly improved query efficiency. The accuracy of A$k$NNS is typically measured by the query recall, defined as
\begin{equation*}
    Recall@k = \frac{|\mathcal{R}_{knn} \cap \mathcal{R}_{knn}'|}{k},
\end{equation*}
where $\mathcal{R}_{knn}$ and $\mathcal{R}_{knn}'$ represent the exact and approximate result sets, respectively.

\noindent
\textbf{Range Search (RS)} is another basic query for high-dimensional vectors, aiming to find all vectors in $\mathcal{D}$ that are within a given distance $radius$ from a query vector $q$. Mathematically, RS returns a subset $\mathcal{R}_{range} \subseteq \mathcal{D}$ such that $\forall v \in \mathcal{R}_{range}, \Gamma(v,q) \leq radius$.

To improve efficiency, approximate range search (ARS) has been a focus of research~\cite{starling}. The accuracy of ARS is measured using average precision ($AP@e\%$), defined as
\begin{equation*}
    AP@e\% = \frac{|\mathcal{R}_{range}'|}{|\mathcal{R}_{range}|},
\end{equation*}
where $\mathcal{R}_{range}$ and $\mathcal{R}_{range}'$ represent the exact and approximate result sets, respectively, and $e\%$ indicates the percentage of vectors that fall within the query range.

Note that although this paper focuses on Euclidean distance, our method naturally extends to other metrics, such as inner product (IP), cosine similarity, and angular distance, via data preprocessing (e.g., vector normalization and augmentation). Such transformations are widely adopted in prior work, including RaBitQ~\cite{RaBitQ, RaBitQ2} and FAISS~\cite{meta-faiss}, to convert between distance metrics. However, our method cannot be applicable to non-metric distances that violate the triangle inequality (e.g., asymmetric metrics).

PG- and PQ-based methods~\cite{kANNSurvey, earlytermination} are widely employed for A$k$NNS and ARS queries. In the next section, we review these methods and analyze their performance bottlenecks.

\begin{figure}[t]
    \vspace{-0.4cm}
    \centering
    \subfloat[The PG method]{
        \label{fig:PG}
	\includegraphics[width=0.49\linewidth]           {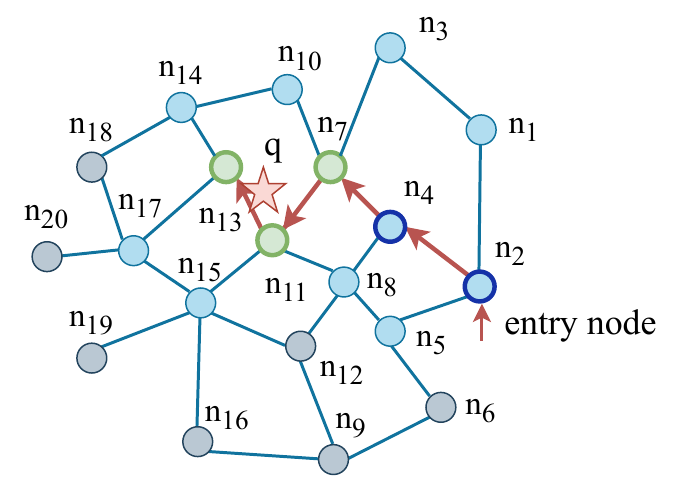}
	}
	\subfloat[The PQ method]{
	\label{fig:PQ}
	\includegraphics[width=0.45\linewidth]        {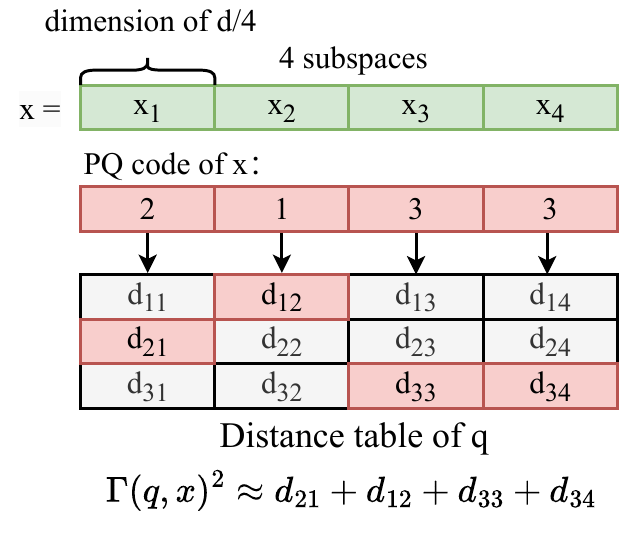}
	}
    \vspace{-0.2cm}
    \caption{PG- and PQ-based methods}
    \label{fig:querysolutions}
    \vspace{-0.2cm}
\end{figure}

\subsection{Review of PG- and PQ-based Methods}
\noindent\textbf{PG-based methods}, such as HNSW~\cite{HNSW} and NSG~\cite{NSG}, employ a graph structure to organize and retrieve data vectors. As illustrated in Figure~\ref{fig:PG}, in this structure, each node represents a vector, connected to its approximate nearest neighbors in different directions. We call a node $v$ a neighbor of a node $p$ if it is connected by $p$. 

During the A$k$NNS query process, a dynamic candidate queue $\mathcal{S}$ of size $ef$ ($ef \geq k$) is maintained to track $e f$ nodes currently closest to the query $q$. The parameter $ef$ balances query efficiency and accuracy. A best-first search then iteratively starts from an entry node. At each iteration, it accesses currently located node's neighbors and computes their distances to $q$. $\mathcal{S}$ is then updated to reflect $ef$ closest nodes to $q$ and the currently located node is marked as visited. In the next iteration, the nearest unvisited node in $\mathcal{S}$ is selected as current node, and its neighbors are examined in a similar manner. The process continues until all nodes in $\mathcal{S}$ have been visited, at which point the first $k$ nodes in $\mathcal{S}$ are returned. For the ARS query, the search proceeds similarly, but an unbounded result queue $R$ is kept to store nodes with distances to $q$ less than $radius$. The search continues until no nodes with distances less than $radius$ are found, at which point the nodes in $R$ are returned.

Note that different PG structures may vary in their query algorithms, such as the differences in the number of queues for searching and termination conditions~\cite{HNSW, NSG}, but their core principle remains the same: traversing the graph to identify nodes closest to $q$ until no closer nodes are found.

\noindent\textbf{PQ-based methods} segment $d$-dimensional data into $m$ subspaces, clustering each into $C$ groups. The vector is encoded into an $m$-dimensional code, where each dimension—a positive integer between 1 and $C$—reflects the group to which the sub-vector belongs in the subspace. As shown in Figure~\ref{fig:PQ}, a $d$-dimensional vector $x$ is mapped to a 4-dimensional code, with a "2" in the first dimension indicating that $x_1$ belongs to the second group in the first subspace.

During query execution, a distance table is first computed, capturing the squared distances from $q$ to each cluster centroid. Specifically, $q$ is divided into $m$ subspaces, and the squared distances are calculated between each subspace component $q_i$ and the corresponding centroids. The squared distance between each data vector $x$ and $q$ is then estimated by summing the distances from $q$ to the centroids $x$ belongs to in each subspace. As illustrated in Figure~\ref{fig:PQ}, $\Gamma(q,x)^2$ is estimated by summing $d_{21}$, $d_{12}$, $d_{33}$, and $d_{34}$, where $d_{ji}$ represents the squared distance between $q_i$ and the centroid $c_j$ assigned to $x_i$. This approach enables efficient calculations from $q$ to all data vectors, requiring only $Cm$ $d/m$-dimensional distance calculations and $mn$ table lookups, where $n$ is the number of data vectors. For A$k$NNS queries, the vectors with the $k$ smallest estimated distances are identified and returned. For ARS queries, vectors are returned if their estimated distances fall within $radius$. 

However, as noted in~\cite{SeRF, starling}, compressing vectors into PQ codes for calculations introduces severe errors, which greatly impair query accuracy. To mitigate the issue, a refinement phase is introduced to identify $k'$ vectors with the smallest estimated distances as candidates, where $k'$ is set much larger than the expected result count. The exact distances for these vectors are then computed to accurately determine which satisfy the query condition.

\noindent 
\textbf{Performance bottlenecks of PG- and PQ-based methods} arise from accessing extensive data for distance calculations, much of which does not contribute to the search process or query results~\cite{DCO, finger}. For PG-based methods, each iteration requires visiting all neighboring nodes for distance calculations, even though some are not included in the candidate queue $\mathcal{S}$. For example, in Figure~\ref{fig:PG}, an A$k$NNS query with $k = 3$ traverses all neighbor nodes (colored blue) of the nodes on the traversal path, each of which undergoes the distance calculation. However, certain "negative" neighbors, such as $n_3$, are far from $q$ and will not be reflected by $\mathcal{S}$, making these calculations redundant. In PQ-based methods, achieving high query accuracy often requires a large value of $k'$, leading to extensive distance calculations during the refinement phase. However, if certain vectors are clearly far from $q$, these calculations can be avoided. Moreover, determining an appropriate value for $k'$ is challenging, particularly for ARS queries, as estimating the number of vectors within a $radius$ from $q$ is non-trivial, potentially leading to unnecessary data access and distance calculations.

\subsection{Analysis of Applying Triangle Inequality}
To reduce unnecessary data access and distance calculations, triangle-inequality-based lower bound pruning is commonly used in low-dimensional settings~\cite{Tri1, Tri2, Tri3, Tri4, Tri5}, which can be naturally extended to HVSS. Specifically, for a query vector $q$, a data vector $x$, and a landmark $l$, the lower bound $lb$ of $\Gamma(q, x)$ is computed from the triangle $\triangle lqx$ as $lb = |\Gamma(l, q) - \Gamma(l, x)|$. Here, $\Gamma(l, x)$ can be precomputed, and $\Gamma(l, q)$ is computed once per query and reused across multiple data vectors, resulting in a low amortized cost. If $lb$ fails to meet the inclusion criteria for the candidate or result queues, $x$ can be pruned without further distance calculations. 

However, we find that these lower bounds often become too small to be effective due to the "\textit{distance concentration}" phenomenon in high-dimensional spaces~\cite{disConcen1, disConcen2}, where distances between data points tend to be similar as dimensionality increases. As a result, for a landmark $l$ following the same distribution of $q$ and $x$, the distances $\Gamma(l, q)$ and $\Gamma(l, x)$ are nearly equal, causing $lb$ to approach 0 and rendering pruning ineffective, as proven in Theorem~\ref{the: disCon}.

\begin{theorem}
\label{the: disCon}
    \textit{Given a query vector $q$, a data vector $x$, and a landmark $l$, all in $E^d$, $|\Gamma(l,q)-\Gamma(l,x)|$ tends to 0 as $d \to \infty$.}
\end{theorem}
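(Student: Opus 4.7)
The plan is to treat $l$, $q$, and $x$ as random vectors whose $d$ coordinates are drawn i.i.d.\ from a common underlying distribution on $\mathbb{R}$ with finite mean and finite variance $\sigma^2$. This is the standard modeling assumption under which the distance concentration phenomenon cited in the paper (\cite{disConcen1, disConcen2}) is formulated, and it matches the empirical picture in Figure~\ref{fig:disConcen} that pairwise distances collapse into a narrow range.

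First I would write the squared distances component-wise:
\[
\Gamma(l,q)^2 = \sum_{i=1}^{d}(l_i - q_i)^2, \qquad \Gamma(l,x)^2 = \sum_{i=1}^{d}(l_i - x_i)^2.
\]
Each summand is i.i.d.\ across $i$, with common mean $\mu := E[(l_1 - q_1)^2] = E[(l_1 - x_1)^2]$ because $q$ and $x$ are identically distributed. The weak law of large numbers then yields $\tfrac{1}{d}\Gamma(l,q)^2 \to \mu$ and $\tfrac{1}{d}\Gamma(l,x)^2 \to \mu$ in probability. A Chebyshev argument quantifies the rate: since the variance of a single summand is a constant depending only on the marginal distribution, $\mathrm{Var}(\Gamma(l,q)^2/d) = O(1/d)$, so $\Gamma(l,q)^2 = \mu d + O_p(\sqrt{d})$, and analogously for $\Gamma(l,x)^2$.

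Next I would pass carefully from squared distances to distances. Writing $\sqrt{\mu d + O_p(\sqrt{d})} = \sqrt{\mu d}\bigl(1 + O_p(1/\sqrt{d})\bigr)$, both $\Gamma(l,q)$ and $\Gamma(l,x)$ concentrate tightly around the same value $\sqrt{\mu d}$. Subtracting the two expansions, the leading $\sqrt{\mu d}$ terms cancel exactly, leaving $|\Gamma(l,q) - \Gamma(l,x)| = o_p(\sqrt{d})$. In the normalized regime that is implicit throughout the paper (e.g.\ unit-norm vectors, or the common practice of scaling distances by $1/\sqrt{d}$), this gives $|\Gamma(l,q)-\Gamma(l,x)| \to 0$ in probability, which is the content of the theorem. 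An equivalent and cleaner formulation is $|\Gamma(l,q)-\Gamma(l,x)|/\max\{\Gamma(l,q),\Gamma(l,x)\} \to 0$, so the triangle-inequality gap becomes negligible relative to the distances themselves.

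The main obstacle is the scaling/interpretation subtlety: as the paper's own empirical evidence shows (SIFT distances lie between $400$ and $700$), the raw difference does not literally vanish in an absolute sense for unnormalized data, so I would either state the conclusion in the relative form above or invoke the normalization convention explicitly. A secondary technicality is the i.i.d.-coordinate assumption: real-world features are correlated, but the proof only needs the empirical second moment $\tfrac{1}{d}\sum_i (l_i-q_i)^2$ to concentrate around a common deterministic limit for $(l,q)$ and $(l,x)$, which holds under much weaker ergodicity/mixing conditions on the coordinate sequence and does not affect the final conclusion.
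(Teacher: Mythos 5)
Your proposal follows essentially the same route as the paper's proof: model the coordinates as i.i.d., use Chebyshev (or the weak law) to show $\Gamma(l,q)^2$ and $\Gamma(l,x)^2$ concentrate around a common value of order $d$, and then pass to the distances via the square-root expansion. If anything, your handling of the last step is more careful than the paper's: concentration at this scale only yields $|\Gamma(l,q)-\Gamma(l,x)| = O_p(1) = o_p\bigl(\sqrt{d}\bigr)$, so, as you note, the difference vanishes relative to the distances themselves (or after normalization/scaling by $1/\sqrt{d}$), whereas the paper's proof asserts the absolute difference tends to zero in probability; your relative formulation is the defensible reading of the statement and is consistent with the empirical picture in Figure~\ref{fig:disConcen}.
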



\begin{proof}
Assume that $q$, $x$, and $l$ are random vectors in $\mathbb{E}^d$, with components $q_i$, $x_i$, and $l_i$ being independent and identically distributed (i.i.d.) random variables with finite variance $\sigma^2$. The squared distance is defined as $\Gamma(l, q)^2 = \sum_{i=1}^d (l_i - q_i)^2$. Since $l_i - q_i$ are i.i.d. with mean 0 and variance $2\sigma^2$, we have
\[
\mathbb{E}[\Gamma(l, q)^2] = 2d\sigma^2,\quad \mathrm{Var}[\Gamma(l, q)^2] = 8d\sigma^4.
\]
By Chebyshev's inequality, for any $\varepsilon > 0$,
\[
\Pr\left[ \left| \Gamma(l, q)^2 - 2d\sigma^2 \right| \geq \varepsilon d \right] \leq \frac{8\sigma^4}{\varepsilon^2 d}.
\]
Thus, $\Gamma(l, q)^2$ concentrates around $2d\sigma^2$ as $d \to \infty$.

Using the delta method or smoothness of the square root function, this implies that $\Gamma(l, q) = \sqrt{\Gamma(l, q)^2}$ concentrates around $\sqrt{2d}\sigma$ as $d \to \infty$, and similarly for $\Gamma(l, x)$. Therefore, the difference $|\Gamma(l, q) - \Gamma(l, x)|$ tends to zero in probability as $d \to \infty$.
\end{proof}

Our simulations, shown in Figure~\ref{fig:pruneRatio}, also confirm that the pruning effectiveness of triangle-inequality-based lower bounds diminishes significantly as dimensionality increases, becoming negligible beyond 32 dimensions. Since HVSS typically handles vectors with dimensions exceeding 96~\cite{HNSW, diskann, starling}, straightforward application of triangle-inequality-based pruning proves to be ineffective.

To improve pruning effectiveness, current methods typically select a batch of landmarks, compute lower bounds for a data point using each of these landmarks, and retain the largest one for pruning~\cite{Tri1, Tri4, Tri6, yao2011}. These landmarks are either selected randomly from the dataset or chosen using heuristic algorithms to maximize their differences~\cite{Tri4, Tri6, yao2011}. However, when applied to HVSS, these methods face challenges in both \textit{efficiency} and \textit{effectiveness}. First, the computational overhead of calculating lower bounds scales with the number of landmarks, leading to inefficiency. Second, the largest lower bound often remains too loose to be effective, as these landmarks are selected from the dataset and still influenced by the distance concentration phenomenon. To address these limitations, we introduce \ourMethod, a versatile \textbf{\underline{t}}riangle-\textbf{\underline{i}}nequality-based operation for high-dimensional \textbf{\underline{s}}imilarity \textbf{\underline{s}}earches in the next section. 

\section{The \ourMethod Operation}
\label{sec: ourMethod}

\begin{figure}
    \centering
    \includegraphics[width=0.8\linewidth]{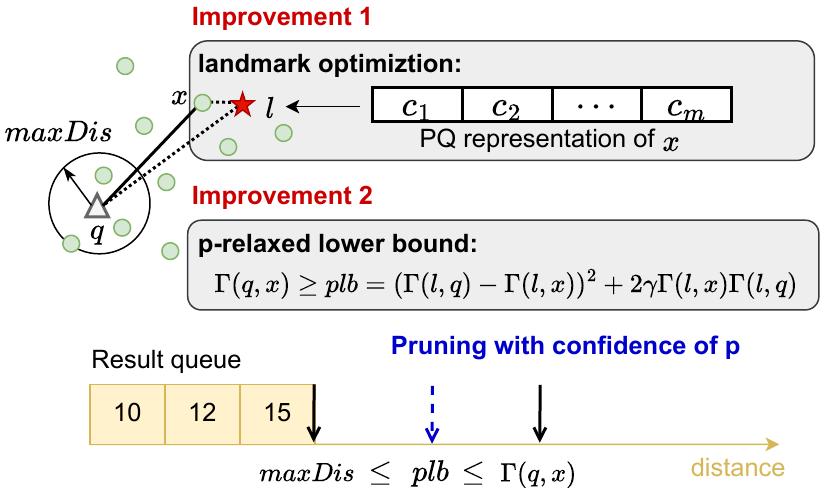}
    \vspace{-0.2cm}
    \caption{TRIM overview}
    \vspace{-0.4cm}
    \label{fig:trim}
\end{figure}

Figure~\ref{fig:trim} presents an overview of \ourMethod, which enhances traditional triangle-inequality-based pruning through two key improvements: landmark optimization (Section~\ref{sec: landmark}) and $p$-relaxed lower bounds (Section~\ref{sec: pLB}). \ourMethod uses the PQ representation of each data vector as its unique landmark for the lower bound calculation and relaxes the strict lower bound from the triangle inequality to enable more aggressive pruning with a controllable confidence level $p$. We then first introduce these two improvements, followed by a detailed description of \ourMethod's overall process in Section~\ref{sec: process}.

\subsection{Landmark Optimization}
\label{sec: landmark} 

Given that computing lower bounds using multiple landmarks reduces efficiency, and landmarks selected from the dataset lead to ineffective pruning, we propose \textit{generating} a \textit{single} \textit{optimized} landmark for each data vector $x$, aiming to maximize its lower bound distance $|\Gamma(l, q)$ -$\Gamma(l, x)|$. Given that $\Gamma(l, q)$ must be computed on the fly, which makes the computation of $\Gamma(l, q)$ a challenge when assigning a landmark for each data vector, our goal also involves minimizing the computational cost of $\Gamma(l, q)$. 

To achieve both, we propose using the PQ method~\cite{pq} to generate landmarks, where the vector represented by $x$'s PQ code serves as its landmark (rather than itself). For example, as shown in Figure~\ref{fig:PQ}, the landmark for $x$ is generated by combining four $d/4$-dimensional sub-vectors, each corresponding to its cluster centroid in the respective subspace. This approach offers advantages in both minimizing the lower bound computational cost and maximizing lower bounds, as detailed below.

\noindent
\textbf{Benefits for lower bound computation.} Using PQ representations as landmarks enables efficient computation of $\Gamma(l, q)^2$ by looking up the distance table $T$, which is generated on the fly at a low cost of $O(Cd)$. Specifically, $\Gamma(l, q)^2$ is exactly computed as $d_{c_1 1} + d_{c_2 2} + \cdots + d_{c_m m}$, where $c_i$ is the $i$-th value of the PQ code. With $n'$ vectors to be processed during querying, the total time complexity of computing $\Gamma(l, q)$ for $n'$ landmarks is only $O(Cd + n'm)$. This can also be accelerated using SIMD, further enhancing efficiency.

\noindent
\textbf{Benefits for maximizing lower bounds.} We first formalize the optimal landmark generation problem in Problem~\ref{pro:optLMGen}, and then show that our method offers an effective solution for this problem.
\begin{problem}
\label{pro:optLMGen}
\textbf{Optimal Landmark Generation}: Given a data vector $x \in \mathcal{D}$ and a query set $Q$, generate a landmark $l$ for $x$ that minimizes the total Mean Squared Error (MSE) between the actual distances from $x$ to queries in $Q$ and the lower bounds derived from $l$:
\begin{equation} 
\label{equ: mse_sum} 
    MSE(x) = E\left[\sum_{q \in Q} \left(\left|\Gamma(l,q) - \Gamma(l,x)\right| - \Gamma(q,x)\right)^2\right] 
\end{equation} 
\end{problem}

Since the locations of query vectors are often unpredictable, we approach this problem heuristically: If a query vector $q$ can be located anywhere, the landmark $l$ should be positioned to either maximize or minimize $\Gamma(l,x)$, so that the difference $\left|\Gamma(l,q) - \Gamma(l,x)\right|$ can be maximized and closest to $\Gamma(q,x)$. Given that the goal of HVSS pruning is to eliminate vectors far from $q$, it is more effective to position $l$ such that $\Gamma(l,x)$ is minimized. This is because when $x$ is far from $q$, a landmark $l$ close to $x$ is likely far from $q$, thereby increasing the difference between $\Gamma(l,q)$ and $\Gamma(l,x)$. Conversely, maximizing $\Gamma(l,x)$ often provides no significant advantage for pruning, as a distant landmark is likely also far from most query vectors. Given this insight, we reduce Problem~\ref{pro:optLMGen} to Problem~\ref{pro:agnostic}.

\begin{problem}
\label{pro:agnostic}
\textbf{Query-Agnostic Landmark Generation}: Given a vector data $x \in \mathcal{D}$, the objective is to find a landmark $l$ that minimizes the MSE between the vector data and its landmark:
\begin{equation}
MSE(x) = E[\Gamma(l,x)^2]
\end{equation}
\end{problem}

We present Deduction~\ref{ded:PQ}, which shows that the optimization task in Problem~\ref{pro:agnostic} aligns with the PQ optimization task, making our method provide an effective solution for enlarging lower bounds.



\begin{deduction}
\label{ded:PQ}
The task in Problem~\ref{pro:agnostic} aligns with that of PQ.
\end{deduction}

\begin{proof}
    PQ maps a data vector $x$ to a code $pq(x)$, with its quality measured by $MSE(x) = E\left[\Gamma(x,pq(x))^2\right]$. The smaller the $MSE(x)$, the better the quantizer, and PQ is designed to minimize $MSE(x)$ as much as possible. By treating the vector represented by $pq(x)$ as the landmark $l$, Problem~\ref{pro:agnostic} becomes equivalent to the PQ's task.
\end{proof}

An alternative solution to Problem~\ref{pro:agnostic} is clustering the dataset and using the closest cluster centroid to $x$ as its landmark. However, this may degrade landmark quality for vectors distant from the centroids. Increasing the number of clusters $W$ helps but adds significant pre-calculation overhead. In contrast, our method provides high-quality landmarks without extra pre-calculation costs. By dividing the data space into $m$ subspaces and clustering each with $C$ centroids, PQ generates $C^m$ potential landmarks, much larger than $W$ (when $C = W$), increasing the likelihood of a close match between data vectors and landmarks. The pre-calculation cost for our method is $O(mCn\frac{d}{m}t)$ when using k-means~\cite{K-means} clustering with $t$ iterations, which is comparable to $O(Wndt)$ for the clustering method.

\noindent
\textbf{Additional benefits.} Our approach offers further advantages: (1) Significant storage savings, as landmarks are stored as compact PQ codes. (2) Seamless integration into HVSS systems, as PQ is a widely-implemented method for indexing and compressing vectors. 

In addition to above analysis, we conduct experiments, detailed in Section~\ref{sec: expUnity}, to show that our strategy significantly outperforms other widely-used methods for generating landmarks.

\subsection{$p$-Relaxed Lower Bounds}
\label{sec: pLB}
Another improvement is to \textit{relax} the lower bound derived from the triangle inequality to provide a larger lower bound and enhance pruning. First, we define the strict lower bound function as follows.

\begin{definition}
\label{def:strict}
    \textbf{Strict Lower Bound Function (LBF)}. Given three vectors $x$, $q$, and $l$, $f(x,q,l)$ is a strict lower bound function if $f(x,q,l) \leq \Gamma(q,x)^2$ always holds.
\end{definition}

According to the triangle inequality, we obtain that $f(x,q,l) = (\Gamma(l,q) - \Gamma(l,x))^2$ is an LBF. Next, we introduce the concept of $p$-relaxed lower bound function.

\begin{definition}
\label{def:pRelaxed}
    \textbf{$p$-Relaxed Lower Bound Function ($p$-LBF)}. Given three vectors $x$, $q$, and $l$, $g(x,q,l)$ is a $p$-relaxed lower bound function if it satisfies $P(g(x,q,l) \leq \Gamma(q,x)^2) = p$.
\end{definition}

$p$-LBF assigns a confidence level $p$ ($0 \leq p \leq 1$) to the lower bound, meaning it is expected to be less than $\Gamma(q,x)^2$ with confidence level $p$. A $p$-LBF provides a larger lower bound than an LBF, improving the pruning effect, though it reduces the query accuracy proportionally to $p$. We introduce a simple yet effective $p$-LBF, as shown in Equation~\ref{equ:gx}. This $p$-LBF enlarges the lower bound by adding a positive term $2\gamma\Gamma(l,q)\Gamma(l,x)$ to the LBF derived from the triangle inequality, where $\gamma \geq 0$ is a tunable parameter.

\vspace{-0.2cm}
\begin{equation}
\label{equ:gx}
    g(x, q, l) = (\Gamma(l,q) - \Gamma(l,x))^2 + 2\gamma\Gamma(l,q)\Gamma(l,x)
\end{equation}

This $p$-LBF is proposed inspired by the cosine theorem, i.e., 
\begin{equation}
    \label{equ:cosineTheorem}
       \Gamma(q,x)^2 = \Gamma(l,q)^2 + \Gamma(l,x)^2 - 2\Gamma(l,q)\Gamma(l,x)\cos{\theta},
\end{equation}
where $\theta$ is the angle between the edges $lq$ and $lx$. Rearranging the cosine theorem yields:
\begin{equation*}
    \Gamma(q,x)^2 = (\Gamma(l,q) - \Gamma(l,x))^2 + 2\Gamma(l,q)\Gamma(l,x)(1-\cos \theta),
\end{equation*}
which forms the prototype of our $p$-LBF. The closer $\gamma$ is to $1-\cos\theta$, the closer the obtained lower bound is to the true squared distance $\Gamma(q,x)^2$. If the probability of $\gamma \leq 1 - \cos{\theta}$ equals $p$, i.e., $P(\gamma \leq 1-\cos\theta) = p$, we achieve a confidence level $p$ such that $g(x,q,l) \leq \Gamma(q,x)^2$, as formalized in Lemma~\ref{ded: relation}. A higher $p$ leads to lower $\gamma$, yielding a looser bound that improves accuracy but reduces pruning and query efficiency.

\begin{lemma}
\label{ded: relation}
    The confidence level $p$ of the $p$-LBF in Equation~\ref{equ:gx} is governed by $\gamma$, expressed as $p = P(\gamma \leq 1 - \cos\theta)$.
\end{lemma}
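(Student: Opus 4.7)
The plan is to derive the equivalence by starting from the cosine theorem (Equation~\ref{equ:cosineTheorem}) and rewriting $\Gamma(q,x)^2$ into a form that syntactically matches the $p$-LBF in Equation~\ref{equ:gx}, so that the inequality $g(x,q,l) \leq \Gamma(q,x)^2$ reduces to a clean inequality between $\gamma$ and $1 - \cos\theta$. Taking probabilities on both sides of this reduced inequality will immediately yield the claimed identity $p = P(\gamma \leq 1 - \cos\theta)$.

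First, I would expand the right-hand side of the cosine theorem by adding and subtracting $2\Gamma(l,q)\Gamma(l,x)$, obtaining the identity
\begin{equation*}
\Gamma(q,x)^2 = (\Gamma(l,q)-\Gamma(l,x))^2 + 2\Gamma(l,q)\Gamma(l,x)(1-\cos\theta),
\end{equation*}
which is exactly the rearrangement already noted in the paper. Next, I would subtract $g(x,q,l)$ from $\Gamma(q,x)^2$ using Equation~\ref{equ:gx}; the common $(\Gamma(l,q)-\Gamma(l,x))^2$ terms cancel, leaving
\begin{equation*}
\Gamma(q,x)^2 - g(x,q,l) = 2\Gamma(l,q)\Gamma(l,x)\bigl((1-\cos\theta) - \gamma\bigr).
\end{equation*}

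Since $\Gamma(l,q)\Gamma(l,x) \geq 0$, the sign of this difference is determined solely by $(1-\cos\theta)-\gamma$, so $g(x,q,l) \leq \Gamma(q,x)^2$ is equivalent to $\gamma \leq 1 - \cos\theta$ whenever $\Gamma(l,q)\Gamma(l,x) > 0$. Taking probability on both sides of this equivalence yields
\begin{equation*}
P\bigl(g(x,q,l) \leq \Gamma(q,x)^2\bigr) = P(\gamma \leq 1 - \cos\theta) = p,
\end{equation*}
which matches the statement by Definition~\ref{def:pRelaxed}.

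The only subtlety—and the part I would be most careful about—is the boundary case where $\Gamma(l,q) = 0$ or $\Gamma(l,x) = 0$, i.e., the triangle degenerates and $\theta$ is undefined. In that case both $g(x,q,l)$ and $\Gamma(q,x)^2$ collapse to the same squared distance (either $\Gamma(l,x)^2$ or $\Gamma(l,q)^2$), so $g(x,q,l) \leq \Gamma(q,x)^2$ holds trivially; since this is a probability-zero event under any continuous distribution on vectors, it does not affect the identity. I would mention this briefly to keep the argument rigorous but not dwell on it, since the lemma is essentially an algebraic rewrite of the cosine theorem followed by taking expectations.
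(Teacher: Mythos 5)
Your proposal is correct and follows essentially the same route as the paper, which derives the lemma by rearranging the cosine theorem into $\Gamma(q,x)^2 = (\Gamma(l,q)-\Gamma(l,x))^2 + 2\Gamma(l,q)\Gamma(l,x)(1-\cos\theta)$ and observing that $g(x,q,l)\leq\Gamma(q,x)^2$ holds exactly when $\gamma\leq 1-\cos\theta$. Your explicit subtraction step and the remark on the degenerate case $\Gamma(l,q)\Gamma(l,x)=0$ only make the paper's informal argument slightly more careful.
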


Next, we propose the way for setting $\gamma$ based on a user-defined $p$, though for simplicity, $\gamma$ can also be manually adjusted.

\noindent
\textbf{Determining $\gamma$ for a given $p$}. According to Lemma~\ref{ded: relation}, the confidence level $p$ corresponds to the probability $ P(\gamma \leq 1 - \cos\theta)$. We formalize the task of finding the cumulative distribution function (CDF) of $1-\cos\theta$ in Problem~\ref{pro:angleDis}. 

\begin{problem}
\label{pro:angleDis}
    \textbf{CDF of $\bm{1-\cos\theta}$}: Let $X = (X_1, \dots, X_d) \in \mathcal{D}$ be a fixed $d$-dimensional vector, $L = (L_1, \dots, L_d)$ be its associated landmark, and $Q = (Q_1, \dots, Q_d)$ be a query vector. Determine the CDF of $1 - Z$, where $Z = \cos{(X-L, Q-L)}$.
\end{problem}

To illustrate the challenge of characterizing the distribution of $1-Z$, we begin by assuming that each component of $Q$ is independently drawn from the standard normal distribution, though our approach can be used in general cases. We first analyze the distribution of $Z^2$, and then transform it to find the distribution of $1-Z$. The following theorem offers a key step in deducing the distribution of $Z^2$.

\begin{theorem}
\label{the:zDis}
    Given two vectors $X'$ and $L'$, if $ \|L'\| = \|L\| $ and $ \cos^2(X - L, L) = \cos^2(X' - L', L') $, then the distributions of $ \cos(X - L, Q - L) $ and $ \cos(X' - L', Q - L') $ are identical.
\end{theorem}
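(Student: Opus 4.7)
The strategy rests on two invariances: $Q\sim N(0,I_d)$ is rotationally symmetric, so $RQ\stackrel{d}{=}Q$ for every $R\in O(d)$, and cosine is preserved under orthogonal maps, $\cos(Ra,Rb)=\cos(a,b)$. Combining them gives, for any $R\in O(d)$,
\[
\cos(X-L,\,Q-L) \;\stackrel{d}{=}\; \cos\bigl(R(X-L),\,Q-RL\bigr).
\]
The theorem therefore reduces to exhibiting a single $R\in O(d)$ with $RL=L'$ and $R(X-L)$ a positive scalar multiple of $X'-L'$, since cosine is also scale-invariant in each argument.

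\textbf{Constructing the rotation.} I would work in the two-dimensional plane spanned by $L$ and $X-L$ (and analogously for the primed data), since the only intrinsic invariants under $O(d)$ are $\|L\|$ and the angle between $L$ and $X-L$. Complete $\hat{L}:=L/\|L\|$ to an orthonormal basis $(\hat{L},\hat{w},\ldots)$ with $\hat{w}$ in the plane of $L$ and $X-L$ and orthogonal to $\hat{L}$; do the same for the primed objects to obtain $(\hat{L}',\hat{w}',\ldots)$. In these bases, $L=\|L\|\hat{L}$ and $X-L=\|X-L\|(\cos\alpha\,\hat{L}+\sin\alpha\,\hat{w})$ with $\cos\alpha=\cos(X-L,L)$, and similarly for the primed versions with the same $\|L\|$. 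The hypothesis $\cos^2\alpha=\cos^2\alpha'$ forces $\alpha'\in\{\pm\alpha,\pi\pm\alpha\}$; after optionally negating $\hat{w}'$---an orthogonal transformation fixing $L'$---one may align $\alpha'$ with $\alpha$. The orthogonal $R$ sending the first basis to the second then satisfies $RL=L'$ and $R(X-L)\parallel X'-L'$, as required, and the substitution $Q\mapsto R^{-1}Q$ finishes the argument.

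\textbf{Main obstacle.} The principal technicality is the sign ambiguity introduced by the $\cos^2$ hypothesis: had the hypothesis been $\cos(X-L,L)=\cos(X'-L',L')$, the alignment would be immediate, but the squared version determines the planar angle only up to sign. Resolving this cleanly requires composing with a reflection that fixes $L'$ (for instance, the reflection in the hyperplane containing $\hat{L}'$ that negates $\hat{w}'$) and verifying that this reflection, combined with the rotation aligning $\hat{L}$ to $\hat{L}'$, still maps $X-L$ into a positive multiple of $X'-L'$. A convenient sanity check along the way is to decompose $Q-L$ as $A\hat{u}+B\hat{w}_\perp$ with $\hat{u}=(X-L)/\|X-L\|$: here $A\sim N(-\|L\|\cos\alpha,1)$ and $\|B\hat{w}_\perp\|^2\sim\chi^2_{d-1}\bigl(\|L\|^2(1-\cos^2\alpha)\bigr)$ are independent, making it explicit that the joint law depends only on $\|L\|$ and $\cos^2\alpha$. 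Once the geometric reconciliation is in place, the distributional conclusion follows immediately from $RQ\stackrel{d}{=}Q$ and the orthogonal invariance of cosine.
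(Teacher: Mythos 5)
Your approach is the same as the paper's: build an orthogonal map $R$ with $RL=L'$ that carries $X-L$ onto (a positive multiple of) $X'-L'$, then invoke rotational invariance of the Gaussian $Q$ and the invariance of cosines under orthogonal maps. You are in fact more careful than the paper on two points: the hypothesis does not give $\|X-L\|=\|X'-L'\|$, so only a positive scalar multiple together with scale-invariance of the cosine is available (the paper silently writes $R(X-L)=X'-L'$), and you explicitly flag the sign ambiguity created by the $\cos^2$ hypothesis, which the paper's proof does not mention.

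However, your resolution of that ambiguity is incomplete, and the leftover case cannot be repaired. Negating $\hat{w}'$ only handles $\alpha'=-\alpha$; when $\cos\alpha'=-\cos\alpha$ (your $\pi\pm\alpha$ cases) no orthogonal map can send $L$ to $L'$ and $X-L$ to a positive multiple of $X'-L'$, because orthogonal maps preserve the signed cosine between $L$ and $X-L$. Moreover, in that case the claimed identity of the signed-cosine laws is genuinely false: in your reduced coordinates, replacing $h_1=\langle X-L,L\rangle/\|X-L\|$ by $-h_1$ and substituting $Q_1\mapsto -Q_1$ shows that the law of $\cos(X-L,Q-L)$ becomes the law of its negative, which is not the same distribution when $h_1\neq 0$ (take $\|L\|$ large: the cosine concentrates near $\mp 1$ depending on the sign of $h_1$). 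So under the stated hypothesis only the distributions of the squared cosine (equivalently $|Z|$) coincide—which is all that Theorem~\ref{the:z2} actually uses, and indeed the paper's own instantiation $L'=(-h_1,h_2,0,\dots,0)$ sits precisely in this opposite-sign configuration. In short: same method as the paper, and your write-up is the more honest one, but the obstacle you name is a defect of the statement itself (it should be phrased for $\cos^2$, or with matching signs of the cosines), not something a better choice of reflection can remove.
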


\begin{proof}
     Given $\|L'\| = \|L\|$ and $\cos^2(X - L, L) = \cos^2(X' - L', L')$, there is a rotation matrix $R$ such that $R(L) = L'$ and $R(X - L) = X' - L'$, preserving the angles and magnitudes. Since the distribution of $Q$ is rotation-invariant, the distributions of $R(Q)$ and $Q$ are identical. Therefore, $\cos(X' - L', Q - L') = \cos(R(X - L), R(Q - L))$ has the same distribution as $\cos(X - L, Q - L)$.
\end{proof}

Let $h_1 = \frac{(X-L)L}{\|X-L\|}$ and $h_2 = \sqrt{\|L\|^2 - h_1^2}$. The expression of $Z^2$ is given by the following theorem:

\begin{theorem}
    \label{the:z2}
    The expression for $Z^2$ is given by $Z^2 = \frac{A}{A + B + C}$, where $A = (Q_1 + h_1)^2$, $B = (Q_2 - h_2)^2$, and $C = \sum_{i=3}^d Q_i^2$. 
\end{theorem}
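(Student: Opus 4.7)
The plan is to exploit the rotational invariance from Theorem~\ref{the:zDis} to replace $(X, L)$ by a canonically positioned pair $(X', L')$ in which only the first two coordinates are nonzero, and then compute $Z^2$ directly in that frame. The freedom granted by the previous theorem is that we may substitute any $(X', L')$ with $\|L'\| = \|L\|$ and $\cos^2(X' - L', L') = \cos^2(X - L, L)$ without altering the distribution of $Z$; once $X' - L'$ is aligned with a coordinate axis, the cosine collapses into a one-coordinate expression and the claimed formula drops out.

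Concretely, I would choose $L' = (-h_1, h_2, 0, \ldots, 0)$ and $X' - L' = -\|X - L\|\, e_1$, so that $X' = (-h_1 - \|X - L\|,\, h_2,\, 0, \ldots, 0)$. A quick check confirms the two hypotheses of Theorem~\ref{the:zDis}: first, $\|L'\|^2 = h_1^2 + h_2^2 = \|L\|^2$ by the very definition of $h_2$; second, $(X'-L')\cdot L' = \|X-L\|\,h_1 = (X-L)\cdot L$, so $\cos(X'-L', L') = h_1/\|L\| = \cos(X-L, L)$, which in particular gives equality of squared cosines.

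Next, I would substitute into $Z = \cos(X'-L',\, Q-L')$. Since $X'-L'$ is parallel to $-e_1$, the numerator $(X'-L')\cdot(Q-L')$ collapses to $-\|X-L\|(Q_1 + h_1)$, while $\|X'-L'\| = \|X-L\|$ cancels one factor; meanwhile $\|Q-L'\|^2$ unpacks coordinate by coordinate as
\begin{equation*}
\|Q-L'\|^2 = (Q_1+h_1)^2 + (Q_2-h_2)^2 + \sum_{i=3}^d Q_i^2 = A+B+C.
\end{equation*}
Squaring $Z$ kills the sign in the numerator and yields $Z^2 = A/(A+B+C)$ exactly as claimed.

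The only real obstacle is bookkeeping around signs when setting up the canonical frame: if one instead orients $X'-L'$ along $+e_1$, one obtains $(Q_1 - h_1)^2$ in the numerator. This is reconciled by the symmetry of the standard normal (so $Q_1 \stackrel{d}{=} -Q_1$), which recovers the stated form in distribution; alternatively, choosing the sign of $L'_1$ as above makes the identity hold on the nose in the canonical representation. Beyond this subtlety the computation is mechanical, and the work done in Theorem~\ref{the:zDis} carries essentially the full analytic weight of the reduction.
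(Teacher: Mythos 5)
Your proof is correct and follows essentially the same route as the paper: both invoke Theorem~\ref{the:zDis} to replace $(X,L)$ by a canonical pair with $L'=(-h_1,h_2,0,\ldots,0)$ and $X'-L'$ along the first coordinate axis, then read off $Z^2 = A/(A+B+C)$ from the coordinatewise expansion of $\cos^2(X'-L',Q-L')$. The only difference is your choice of $X'-L' = -\|X-L\|\,e_1$ versus the paper's $X'=(0,h_2,0,\ldots,0)$ (i.e., $X'-L'=h_1 e_1$), which is immaterial since the cosine is invariant to the length and sign of that direction; if anything, your representative preserves $\|X'-L'\|=\|X-L\|$ and the exact (unsquared) cosine, making the underlying rotation argument slightly cleaner.
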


\begin{proof}
    (Sketch.) Let $X' = (0, h_2, 0, \ldots, 0)$ and $L' = (-h_1, h_2, 0, \ldots, 0)$. By Theorem~\ref{the:zDis}, the distribution of $\cos(X-L, Q-L)$ is identical to that of $\cos(X'-L', Q-L')$. Thus, to find the distribution of $Z^2$, we analyze the distribution of $\cos^2(X' - L', Q - L')$:
    \begin{equation}
    \label{equ: cos}
        \cos^2(X' - L', Q - L') = \frac{(Q_1 + h_1)^2}{(Q_1 + h_1)^2 + (Q_2 - h_2)^2 + \sum_{i=3}^d Q_i^2}.
    \end{equation}
    With $A = (Q_1 + h_1)^2$, $B = (Q_2 - h_2)^2$, and $C = \sum_{i=3}^d Q_i^2$, we express $Z^2$ as $\frac{A}{A + B + C}$. 
\end{proof}

Theorem~\ref{the:z2} shows that $Z^2$ is the ratio of three independent components with different distributions. Since $h_1$ and $h_2$ are often nonzero, such ratios generally do not have simple closed-form distributions, which makes the exact derivation of $Z^2$ (and $Z$ itself) challenging~\cite{theory1, theory2}. To address this, we propose two empirical fitting methods: one for queries following a standard normal distribution (e.g., the \nyt dataset), and another for queries without a significant distribution pattern (e.g., the \glove dataset). The first strategy suits scenarios where data is standardized, as in many machine learning tasks. For example, latent variables in Variational Autoencoders (VAEs) are often modeled as standard normal, and synthetic datasets used for benchmarking typically follow this distribution as well. The second strategy targets real-world embeddings (e.g., images, texts, or user behavior), which often exhibit skewness, multimodality, or heavy tails, requiring more flexible fitting.

For queries without a clear distribution pattern, we directly sample a representative subset of data vectors and queries, compute the values of $1-Z$, and fit the CDF. However, this approach is computationally expensive, as it requires sampling many $d$-dimensional vectors and performing extensive distance calculations. 

For queries following a standard normal distribution, we leverage the fact that $A \sim \chi_1(h_1^2)$, $B \sim \chi_1(h_2^2)$ (non-central chi-squared distributions), and $C \sim \chi_{d-3}$ (a chi-squared distribution with $d-3$ degrees of freedom). This allows us to reduce computational costs through the following steps:

\begin{itemize}
    \item Generate random variants following the distributions $\chi_1(h_1^2)$, $\chi_1(h_2^2)$, and $\chi_{d-3}$, respectively.
    \item Compute the values of $Z^2$ using Equation~\ref{equ: cos} for all variants, and then fit the CDF $F_{Z^2}(z)$ of $Z^2$.
    \item Obtain the CDF $F_{1-Z}(y)$ of $1-Z$ using Theorem~\ref{the:1-Z}. 
\end{itemize}

\begin{theorem}
    \label{the:1-Z}
    $F_{1-Z}(y)$ can be expressed in terms of $F_{Z^2}(z)$ as:
    \begin{equation*}  
    F_{1-Z}(y) = 
    \begin{cases} 
      \frac{1}{2} - \frac{1}{2}F_{Z^2}((1-y)^2), & 0 \leq y \leq 1 \\
      \frac{1}{2} + \frac{1}{2}F_{Z^2}((1-y)^2), & 1 < y \leq 2.
    \end{cases}
    \end{equation*}
\end{theorem}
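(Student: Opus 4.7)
The plan is to rewrite $F_{1-Z}(y) = P(Z \ge 1-y)$ and reduce the right-hand side to an expression in $F_{Z^2}$ via a symmetry argument. Since $Z = \cos\theta \in [-1,1]$, the variable $1-Z$ lies in $[0,2]$, and the two pieces of the piecewise formula correspond exactly to whether the threshold $1-y$ is nonnegative (i.e.\ $y \le 1$) or negative ($y > 1$).

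The central identity I would use is the symmetric-window formula: assuming $Z \stackrel{d}{=} -Z$, for every $a \ge 0$ one has $P(0 \le Z \le a) = P(-a \le Z \le 0) = \tfrac12 P(|Z| \le a) = \tfrac12 F_{Z^2}(a^2)$ and correspondingly $P(Z \ge a) = \tfrac12\bigl(1 - F_{Z^2}(a^2)\bigr)$. With this, the first case ($0 \le y \le 1$, threshold $1-y \ge 0$) gives $P(Z \ge 1-y) = \tfrac12 - \tfrac12 F_{Z^2}((1-y)^2)$, matching the first branch. The second case ($1 < y \le 2$, threshold $1-y < 0$) is handled by the decomposition $P(Z \ge 1-y) = P(Z \ge 0) + P(1-y \le Z < 0) = \tfrac12 + \tfrac12 F_{Z^2}((y-1)^2)$, which becomes the second branch after noting $(y-1)^2 = (1-y)^2$. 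Evaluating at $y = 0, 1, 2$ (using $F_{Z^2}(0) = 0$ and $F_{Z^2}(1) = 1$) recovers $0, \tfrac12, 1$, consistent with $1-Z$ having range $[0,2]$ and median $1$ under the symmetry hypothesis.

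I expect the main obstacle to be establishing $Z \stackrel{d}{=} -Z$. From the normal-form expression used in Theorem~\ref{the:z2}, $Z$ can be written as $\mathrm{sign}(h_1)(Q_1 + h_1)/\sqrt{A + B + C}$, whose sign is biased whenever $h_1 \ne 0$, i.e., whenever $L$ is not orthogonal to $X - L$. I would address this by combining the rotation trick of Theorem~\ref{the:zDis} with the rotational invariance of the standard normal $Q$: an orthogonal map that preserves the lengths and inner products entering the sign-free component while flipping the parallel component of $Q - L$ along $X - L$ sends $Z$ to $-Z$ and leaves its distribution unchanged. When this cannot be arranged exactly, the stated identity should be read as the CDF of the symmetrised $Z$, which is precisely what the empirical-fitting pipeline of Section~\ref{sec: pLB} consumes, so the two-case calculation above still delivers the usable formula.
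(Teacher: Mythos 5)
Your derivation is essentially the paper's own proof: it likewise rewrites $F_{1-Z}(y)=P(Z\ge 1-y)$ and invokes the symmetry $P(Z\ge a)=\tfrac12 P(|Z|\ge a)=\tfrac12\bigl(1-F_{Z^2}(a^2)\bigr)$ for $a\ge 0$ (justified there only by the remark ``since $Z=\cos\theta$'') to obtain the two branches exactly as you do. The obstacle you flag is real but not resolved by the paper either---it never establishes $Z\stackrel{d}{=}-Z$, and for $h_1\neq 0$ this symmetry is only approximate---so your argument, including the honest caveat, matches the paper's approach.
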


\begin{proof}
    The CDF of $1-Z$ is given by $F_{1-Z}(y) = P(1-Z \leq y) = P(Z \geq 1-y)$. For $0 \leq y \leq 1$, this is equivalent to $\frac{1}{2}P(Z \leq -(1-y)) + \frac{1}{2}P(Z \geq 1-y)$ since $Z = \cos \theta$, which translates to $\frac{1}{2}P(Z^2 \geq (1-y)^2)$. Using $F_{Z^2}(z)$, we have $P(Z^2 \geq (1-y)^2) = 1 - F_{Z^2}((1-y)^2)$, and thus $F_{1-Z}(y) = \frac{1}{2} - \frac{1}{2}F_{Z^2}((1-y)^2)$. For $1 < y \leq 2$, the relation similarly gives $F_{1-Z}(y) = \frac{1}{2} + \frac{1}{2}F_{Z^2}((1-y)^2)$.
\end{proof}

Instead of directly fitting $F_{1-Z}(y)$, we first fit $F_{Z^2}(z)$ and transform it. This approach significantly reduces computational costs, as it only requires sampling three one-dimensional distributions, applying Equation~\ref{equ: cos}, and performing a simple transformation. 

Figure~\ref{fig:CDF} shows an example of a resulting CDF computed for a data vector $x$ on the \nyt dataset. Given a parameter $p$, the corresponding $\gamma$ can be directly obtained from the CDF. Notably, even with $p = 1$, a large $\gamma$ (e.g., 0.7) can still be achieved, enabling a much tighter lower bound with negligible loss in query accuracy. In practice, setting $p = 1$ by default avoids aggressive pruning and removes the need for manual tuning.

\begin{figure}
    \centering
    \includegraphics[width=0.6\linewidth]{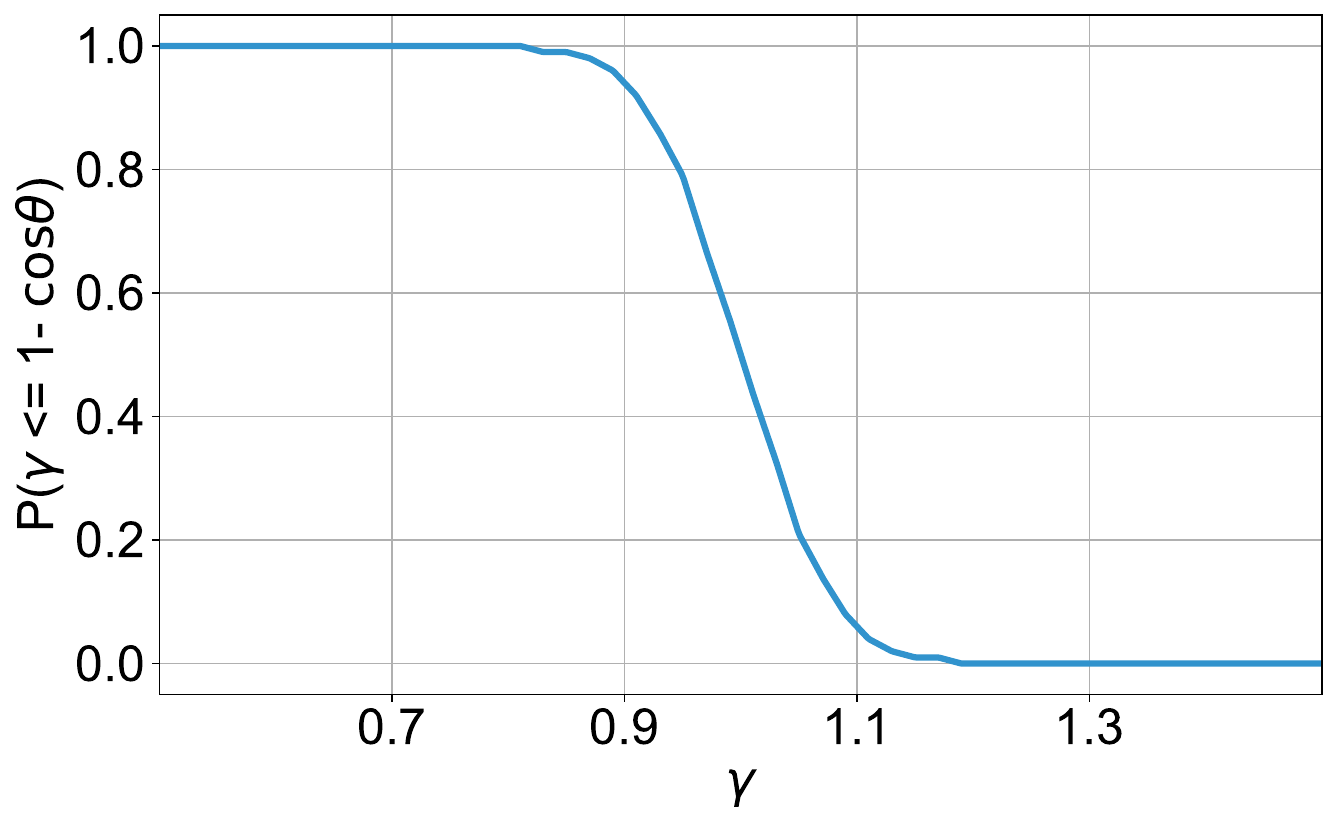}
    \vspace{-0.2cm}
    \caption{The CDF for a given vector $x$ and its landmark}
    \vspace{-0.4cm}
    \label{fig:CDF}
\end{figure}

It is also important to clarify that we generally set a global $\gamma$ value for the entire dataset, rather than a local one for each data vector, to reduce pre-calculated and storage overhead. This is achieved by calculating CDFs for a representative subset of data vectors and retaining the lowest $\gamma$ value for a given $p$.

\subsection{Processes of \ourMethod}
\label{sec: process}
Building on these two improvements, we propose \ourMethod, with its preprocessing and query processing phases detailed below.

\noindent
\textbf{The preprocessing phase} involves two steps: (1) For each data vector $x$, we generate its landmark $l$ using the PQ method, storing both its PQ code and distance $\Gamma(l, x)$ for future use. We also save the cluster centroids produced in the PQ process for computing the distance table. (2) For a representative subset of data vectors, we compute and store their CDFs to determine a global $\gamma$ for a given $p$ on the fly. If $p$ remains fixed across queries, $\gamma$ can be pre-computed and stored, eliminating the need to store CDFs.

\noindent
\textbf{The query processing} involves three operations: In the initial step, (1) construct a distance table $T$ by calculating the squared distances from $q$ to the centroids in each subspace. $T$ is used to compute $\Gamma(l, q)$, by summing the squared distances from $q$ to the centroids corresponding to $l$ in each subspace; (2) based on a given $p$, compute the lowest $\gamma$ value (if not pre-computed) using the stored CDFs to establish the $p$-LBF for the search. (3) In the iteration step, as shown in Figure~\ref{fig:trim}, when reaching a vector $x$, compute a $p$-relaxed lower bound $plb$ for pruning, where $\Gamma(l, q)$ is retrieved from $T$, and $\Gamma(l, x)$ is directly read from the stored data. If $plb$ exceeds the maximum distance limit, $x$ is pruned—with confidence level $p$—without computing the exact distance.

\ourMethod can be easily integrated into popular memory-based PG and PQ series methods, as well as disk-based methods. Compared to existing PQ-integrated methods (e.g., HNSW+PQ, IVFPQ, and PQ+Vamana), \ourMethod differs in three aspects: (1) it uses a carefully designed $p$-LBF instead of PQ distances for estimation, (2) it prunes using lower bounds rather than estimated distances, and (3) it adopts optimized query algorithms and data layout (see Sections 4 and 5). The integration of \ourMethod is detailed in the following sections.

\section{\ourMethod for Memory-Based Methods}
\label{sec: ForMem}
This section details how to integrate \ourMethod into memory-based HVSS methods, categorized into PG-based and PQ-based methods.

\subsection{Integration for PG-based Methods}
Recall that PG-based methods utilize a graph structure to perform queries via a best-first search. Below, we describe how to integrate \ourMethod into A$k$NNS and ARS queries, respectively.

\begin{algorithm}[t]
  \caption{A$k$NNS\_For\_PG ($e$, $q$, $k$, $ef$, $p$)}
  \label{alg:AkNN}
  \small 
  \KwIn{entry node $e$, query vector $q$, parameter $k$, candidate queue size $ef$, and confidence level $p$}
  \KwOut{$k$ nearest nodes of $q$}
  Initialize unbounded search queue $\mathcal{S} = \{(plb_e, e)\}$, candidate queue $\mathcal{C} = \{(\Gamma(q,e), e)\}$ (size $ef$), result queue $\mathcal{R} = \{(\Gamma(q,e), e)\}$ (size $k$), and visited set $\mathcal{V} = \{e\}$\;
  $maxDis \leftarrow$ the max. distance between $q$ and the nodes in $\mathcal{R}$\;
  $maxCanDis \leftarrow $ the max. distance between $q$ and the nodes in $\mathcal{C}$\;
  Calculate the distance table $T$ and $\gamma$ (if not pre-computed)\;
  \While{$\mathcal{S} \neq \emptyset$}{
    $(plb_x, x) \leftarrow$ pop the node nearest to $q$ in $\mathcal{S}$ and its lower bound\;
    \If{$plb_x > maxCanDis$ and $|\mathcal{C}| = ef$}{
        \textbf{break;}
     }
    \For{each unvisited neighbor $v$ of $x$}{
        $\mathcal{V}$.add($v$)\;
        $plb_v \leftarrow$ compute the relaxed lower bound of $\Gamma(q, v)$\;
        \If{$|\mathcal{C}| < ef$ or $plb_v < maxDis$}{
            $\mathcal{S}$.add($plb_v, v$)\; $\mathcal{C}$.add($\Gamma(q, v), v$);
            Resize $\mathcal{C}$ and update $maxCanDis$\;
            $\mathcal{R}$.add($\Gamma(q, v), v$);
            Resize $\mathcal{R}$ and update $maxDis$\;
        }
        \Else{
            \If{$plb_v < maxCanDis$}
            {
                $\mathcal{S}$.add($plb_v$,$v$)\; 
                $\mathcal{C}$.add($plb_v$,$v$);
                Resize $\mathcal{C}$ and update $maxCanDis$\;
            }
        }
    }
  }
  \Return{$\mathcal{R}$;}
\end{algorithm}

\noindent
\textbf{A$k$NNS queries.} 
Algorithm~\ref{alg:AkNN} outlines the query process using \ourMethod. The algorithm follows other Distance Comparison Operations (DCOs)~\cite{DCO, DCOBenchmark}, utilizing three queues for querying: search queue $\mathcal{S}$, candidate queue $\mathcal{C}$, and result queue $\mathcal{R}$, sorted by lower bounds, hybrid distances (lower bounds and exact distances), and exact distances, respectively. During initialization (Lines 1–4), these queues and the visited set $\mathcal{V}$ are initialized with an entry node, while $maxDis$ and $maxCanDis$ track the maximum distances from $q$ to nodes in $\mathcal{R}$ and $\mathcal{C}$. The distance table $T$ and parameter $\gamma$ are also computed (if not pre-computed) for lower bound calculations. During iteration (Lines 5–19), for each unvisited neighbor $v$ of the node $x$ with the smallest lower bound in $\mathcal{S}$, its relaxed lower bound $plb_v$ for $\Gamma(q, v)$ is computed. If $|\mathcal{C}| < ef$ or $plb_v < maxDis$, the distance $\Gamma(q, v)$ is computed, and $v$ is added to $\mathcal{S}$, $\mathcal{C}$, and $\mathcal{R}$. Otherwise, the exact distance is skipped, and if $plb_v < maxCanDis$, $v$ and its lower bound are added to $\mathcal{S}$ and $\mathcal{C}$ for searching. The algorithm terminates when $\mathcal{C}$ contains $ef$ nodes, and the lower bound $plb_x$ of the current node $x$ exceeds $maxCanDis$ (Lines 7–8).



\noindent
\textbf{ARS queries.} The ARS query algorithm is similar to the A$k$NNS algorithm, with a key difference that $\mathcal{R}$ is unbounded. When evaluating a neighboring node $v$, Line 12 is modified: if $|\mathcal{C}| < ef$ or $plb_v$ is not larger than the threshold $radius$, the exact distance is computed. Node $v$ is added to $\mathcal{R}$ only if its exact distance is within the threshold; otherwise, no exact distance calculation is performed. The termination conditions are consistent with those of A$k$NNS.


\noindent
\textbf{Time complexity and performance analysis.} Let $n'$ be the number of nodes accessed by the original algorithm, and let $a$ be the pruning ratio of \ourMethod. The original algorithm has a complexity of $O(n'd)$. With \ourMethod, the complexity becomes $O(Cd + mn' + (1-a)n'd)$, where $O(Cd + mn')$ accounts for the pruning cost (i.e., computing the distance table and lower bounds), and $(1-a)n'd$ represents the remaining distance calculations. The cost of computing $\gamma$ is omitted, as it can be precomputed or efficiently evaluated.  

As noted in~\cite{DCOBenchmark}, the performance of DCOs depends primarily on pruning cost (i.e., distance estimation overhead) and pruning ratio. The PQ-based method has been shown to offer the lowest pruning costs. However, its direct pruning ratio without applying the triangle inequality is very low ($\leq 50\%$). \ourMethod bridges this gap, achieving a pruning cost comparable to PQ while significantly improving the pruning ratio to 99\% (see Section~\ref{exp: expMemory}). More importantly, with SIMD optimization enabled, many DCOs fail to outperform the original algorithm due to their poor SIMD compatibility and the reduced cost of exact distance calculations under SIMD. Conversely, \ourMethod is SIMD-friendly and achieves an ultra-high pruning ratio, providing a significant performance advantage even when distance calculations become cheaper.



\noindent
\textbf{Space complexity.} 
Integrating \ourMethod into PG-based methods introduces additional storage, consisting of four components: (1) distances of $n$ vectors to their landmarks, (2) landmark identifiers (i.e., PQ codes) for $n$ vectors, (3) centroids in each subspace (requiring $Cm$ space), and (4) CDFs or a $\gamma$ value (constant space $f$). The total additional space complexity is $O(n + mn + Cm + f)$. Compared to the PG index, which typically requires over 16 connections per vector~\cite{HNSW}, these additional storage costs are generally modest. \ourMethod only adds a float value and a bit-stored PQ code (8 bits for $C = 256$) per vector, along with small, amortized constant overheads.


\subsection{Integration for PQ-based Methods}
\label{sec: ForPQ}
Recall that PQ-based methods encode all data vectors into PQ codes, use the distance table $T$ to estimate distances to $q$ (termed PQ distances), and identify $k'$ ($k' \geq k$) vectors with the smallest PQ distances as candidates. In A$k$NN queries, the top-$k$ vectors with the smallest exact distances to $q$ are returned, while ARS queries return vectors whose exact distances fall within $radius$. 

When integrating \ourMethod, the $p$-LBF is used to estimate distances and provide lower bounds for pruning. Specifically, traditional PQ-based methods approximate the data vector $x$ using its PQ-code-represented landmark $l$, i.e., $\Gamma(q, x)^2 \approx \Gamma(q, l)^2$. In contrast, \ourMethod estimates the distance as:
\begin{equation*}
    \Gamma(q,x)^2 \approx \Gamma(l,q)^2 + \Gamma(l,x)^2 - 2\Gamma(l,q)\Gamma(l,x)(1-\gamma).
\end{equation*} 
In A$k$NNS queries, a result set $\mathcal{R}$ of size $k$ is maintained, populated with the first retrieved vectors. The maximum exact distance in $\mathcal{R}$ is tracked by $maxDis$. For each subsequent vector, if its estimated distance (i.e., lower bound) is less than $maxDis$, its exact distance is computed and added to $\mathcal{R}$; otherwise, it is pruned without distance calculation. In ARS queries, the process is similar, except that $\mathcal{R}$ is unbounded, and the exact distance is computed only when the lower bound is less than the threshold $radius$.

Our method provides several advantages. First, it dynamically determines the number of candidate results during the query process, eliminating the need to pre-define $k'$. This is especially beneficial for ARS queries, where determining the number of results in advance is challenging. Second, it removes the need for an explicit refinement step. By providing efficient pruning, our method significantly decreases the number of data vectors that need to be accessed for distance calculations.


\section{\ourMethod for Disk-Based Methods}
\label{sec: ForDisk}

\begin{figure}
    \centering
    \includegraphics[width=\linewidth]{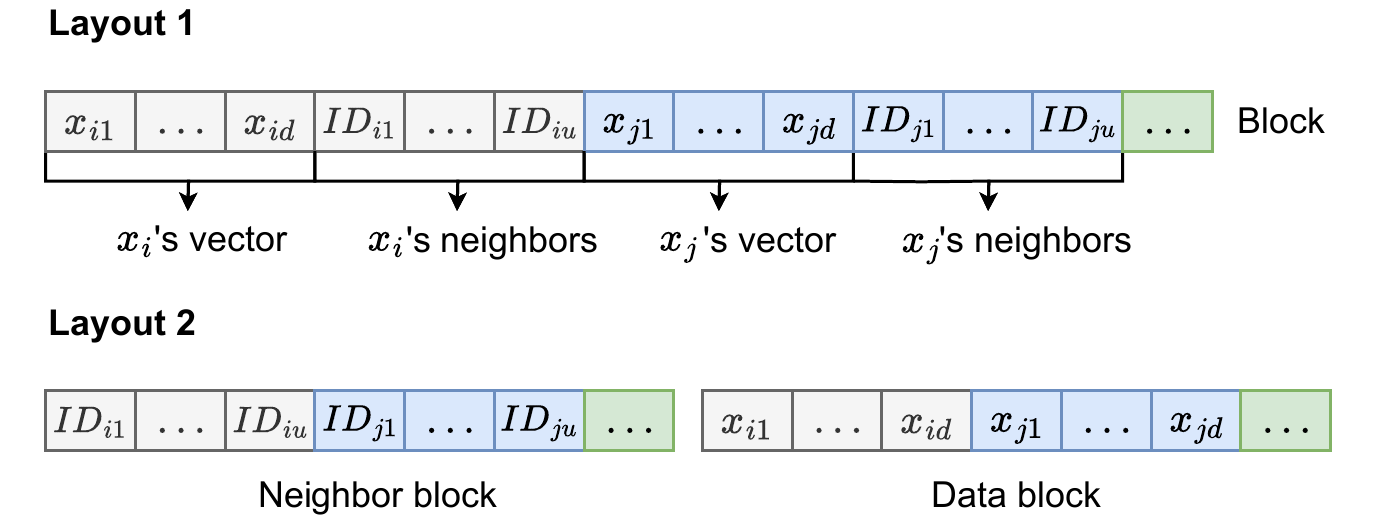}
    \vspace{-0.2cm}
    \caption{Data layouts in disk}
    \vspace{-0.2cm}
    \label{fig:layout}
\end{figure}
Current disk-based methods like DiskANN~\cite{diskann} and Starling~\cite{starling} combine PQ- and PG-based methods for HVSS. They store PQ codes in memory, while data vectors and the PG index reside on disk. As shown in Layout 1 of Figure~\ref{fig:layout}, each disk block stores data vectors and their neighbor IDs (positioned adjacent to the data vector). Starling further saves block reads by placing neighboring nodes (e.g., $x_i$ and $x_j$) in the same block. During A$k$NNS queries, a fixed-size search queue $\mathcal{S}$ (sorted by PQ distances) is used for navigation and a result queue $\mathcal{R}$ (sorted by exact distances) is kept. The algorithm starts at an entry node, marks it as visited, and accesses the block containing the node’s data vector and neighbor IDs. It computes the node's exact distance to update $\mathcal{R}$ and calculates PQ distances for a subset of neighbors in the block to update $\mathcal{S}$. The search iterates by selecting the nearest unvisited node in $\mathcal{S}$ until all nodes are visited. For ARS queries, multiple rounds of the A$k$NNS-like process are run to progressively explore nodes within $radius$.

\noindent\textbf{Limitations of existing methods.} \underline{\textbf{First}}, the tightly-coupled storage of data vector and neighbor IDs requires them to be loaded together, resulting in unnecessary I/O costs. For example, nodes far from the query $q$ typically do not contribute to updating $\mathcal{R}$ and are used only for navigation, making accessing their data vectors wasteful.
\underline{\textbf{Second}}, Starling's layout loses effectiveness for the data dimension $d > 1000$, where typical 4KB blocks can store only one vector and its neighbors, negating the advantage of co-locating neighboring nodes. Even with larger page sizes (8KB or 16KB), each block can store only a few vectors, limiting the effectiveness of neighbor co-location.
\underline{\textbf{Third}}, ARS queries suffer from inefficiency due to multiple rounds of exploration for the number of results.


\begin{algorithm}[t]
  \caption{A$k$NNS\_Of\_tDiskANN ($e$, $q$, $k$, $ef$, $p$)}
  \label{alg:AkNNDisk}
  \small 
  \KwIn{entry node $e$, query vector $q$, parameter $k$, search queue size $ef$, confidence level $p$}
  \KwOut{$k$ nearest nodes of $q$}
  Initialize search queue $\mathcal{S} = \{(pqdis_e,e)\}$ (size $ef$), result queue $\mathcal{R} = \emptyset$ (size $k$), and visited set $\mathcal{V} = \{e\}$ \; 
  $maxDis \leftarrow$ the max. distance between $q$ and the nodes in $\mathcal{R}$\;
  Calculate the distance table $T$ and $\gamma$ (if not pre-computed)\;
  \While{$\mathcal{S} \neq \emptyset$}{
    $(pqdis_x, x) \leftarrow$ pop the node nearest to $q$ in $\mathcal{S}$\;
    \If{$x$'s neighbor IDs is not in the LRU cache}{
        access and cache the block $B$ containing $x$'s neighbors\;
    }
    \Else{
        access $x$'s neighbor IDs from the cache\;
    }
    \For{each unvisited neighbor $v$ of $x$}{
        $\mathcal{V}$.add($v$)\; $\mathcal{S}$.add($pqdis_v, v$); Resize $\mathcal{S}$\;
        }
        
    $plb_x \leftarrow$ compute $x$'s relaxed lower bound based on $pqdis_x$\; 
    \If{$|\mathcal{R}|=k$ and $maxDis < plb_x$}{
        continue\;
    }
    access the data block $B'$ containing $x$ if not already read\;
    \For{each node $b$ in $B'$}{
        \If{$|\mathcal{R}|< k$ or $\Gamma(q, b) < maxDis$}
        {
            $\mathcal{R}$.add($\Gamma(q, b), b$)\; 
            Resize $R$ and update $maxDis$\; 
        }
    }
  }
  \Return{$\mathcal{R}$;}
\end{algorithm}

\noindent\textbf{Improvements using \ourMethod and other optimizations.}
To address these limitations, we propose \ourDiskMethod, which integrates \ourMethod into existing methods with enhanced data layout. Specifically, we first decouple the storage of neighbor IDs and data vectors, as illustrated in Layout 2 of Figure~\ref{fig:layout}. In this layout, neighbor IDs and vectors are stored in separate blocks—neighbor blocks and data blocks—while still co-locating neighboring nodes within their respective blocks. This design provides two key benefits: (1) it allows access to only the neighbor IDs, avoiding unnecessary data loading, and (2) it keeps the benefit of co-locating neighboring nodes in the neighbor blocks, even when $d > 1000$, as the neighbor count per node is typically below 40~\cite{HNSW}. Under this data layout, \ourMethod is then employed to identify nodes whose data blocks do not need to be read. By reducing data access at the algorithmic level, our method improves efficiency largely independent of page size.

The overall A$k$NNS algorithm is described in Algorithm~\ref{alg:AkNNDisk}. In each iteration, we access the nearest unvisited node $x$ in $\mathcal{S}$, retrieve its neighbor IDs from the neighbor block or cache (Lines 5-9), and compute the PQ distances of its neighbors to update $\mathcal{S}$ (Lines 10-12). Note that the cache we use stores only neighbor IDs, which differs from DiskANN, where several neighbor IDs and data vectors are pre-fetched~\cite{diskann}. Next, \ourMethod determines whether to read the data vector of $x$ (Lines 13-15). If $plb_x > maxDis$ and $|\mathcal{R}| = k$, the data block is pruned without disk access. Otherwise, the data block is accessed, and the exact distances of vectors in the block are computed to update $\mathcal{R}$ (Lines 16-20). In the ARS algorithm, the result queue $\mathcal{R}$ is unbounded and the data block for a node $x$ is accessed only if its lower bound is within $radius$. Our ARS algorithm can dynamically adjust the number of candidate results in a single round, eliminating the need for multiple rounds of exploration.

\section{Experiments}
\label{sec: exp}

\subsection{Experimental Setup}
\noindent \textbf{Datasets.} 
As summarized in Table~\ref{tab:datasets}, we use widely adopted benchmark datasets for evaluation~\cite{NHQ, DCO, diskann, HNSW}, covering a diverse range of data sizes, dimensions, and sources. Different datasets are selected for memory-based and disk-based experiments. For memory-based evaluations, the \glove, \nyt, \ti, and \gist datasets are used for most evaluations, while the \siftTen dataset is used to assess scalability. Among these, \nyt contains Gaussian-distributed vectors, whereas the others lack clear distribution patterns. \glove and \nyt are originally measured using angular distance; following the preprocessing in Section~\ref{sec: problemDef}, we convert them to Euclidean space for evaluation.
For disk-based experiments, we follow \Starling~\cite{starling} by evaluating query performance within a fixed-sized data segment, adjusting the number of vectors accordingly for the \cohere and \openai datasets. We also include the 100M-scale SIFT dataset to further assess the scalability.

\begin{table}[t]
  \centering
  \caption{Statistics of datasets}
  \label{tab:datasets}
  \resizebox{\columnwidth}{!}{
  \begin{tabular}{c|c|c|c|c|c|c}
    \toprule
    Storage Type & Dataset & Dimension & \#Vectors &\#Queries & Source & Data Size\\
    \hline
    \hline
    \multirow{4}{*}{Memory} & GloVe & 100 & 1,183,514 & 10,000 & Texts & 0.4 GB \\
    \cline{2-7}
    & SIFT10M & 128 & 10,000,000 & 10,000 & Images & 1.2 GB \\
    \cline{2-7}
    & NYTimes & 256 & 290,000 & 10,000 & Texts & 0.3 GB \\
    \cline{2-7}
     & Tiny5M & 384 & 5,000,000 & 1,000 & Images & 7.2 GB \\
    \cline{2-7}
    & GIST & 960 & 1,000,000 & 1,000 & Images & 3.6 GB \\
    \hline
    \multirow{2}{*}{Disk} & SIFT100M & 128 & 100,000,000 & 10,000 & Images & 12.0 GB \\
    \cline{2-7}
    & Cohere & 768 & 2,000,000 & 1,000 & Texts & \multirow{2}{*}{\begin{tabular}[c]{@{}c@{}}6.0 GB \\[-0.6ex] (A data segment)\end{tabular}} \\
    \cline{2-6}
    & OpenAI & 1536 & 1,000,000 & 1,000 & Texts \\
    \bottomrule
\end{tabular}}
\end{table}

\noindent \textbf{Compared methods.} 
In our in-memory experiments, we integrate \ourMethod into two widely used HVSS algorithms: \HNSW~\cite{HNSW} (PG-based) and \IVFPQ~\cite{meta-faiss} (PQ-based), resulting in two variants: \ourHNSW and \ourIVFPQ. \ourHNSW is compared with \HNSW and its DCO variants, namely \HNSWADS, \HNSWPCA, \HNSWOPQ, and \HNSWRaBitQ. The DCOs is selected following the benchmark~\cite{DCOBenchmark}: ADS (i.e., ADSampling~\cite{DCO}) and PCA~\cite{DCOBenchmark} are top-performing transformation-based techniques that reduce computation via dynamic dimensionality reduction, while OPQ~\cite{OPQ} and RaBitQ~\cite{RaBitQ,RaBitQ2} represent state-of-the-art distance estimation methods for pruning. We choose \HNSWOPQ over \HNSWPQ for its theoretical advantages and superior performance in our preliminary experiments.
\ourIVFPQ is compared with \IVFPQ and a clustering-based method, \Tribase. To further accelerate our method, we integrate FastScan~\cite{FastScan1,FastScan2}, resulting in \ourIVFPQfs, which is compared against other FastScan-based baselines: \IVFPQfs and \IVFRaBitQ. For disk-based experiments, we compare \ourDiskMethod with two state-of-the-arts: \DiskANN~\cite{diskann} and \Starling~\cite{starling}.



\begin{figure*}
    \centering
    \includegraphics[width=\linewidth]{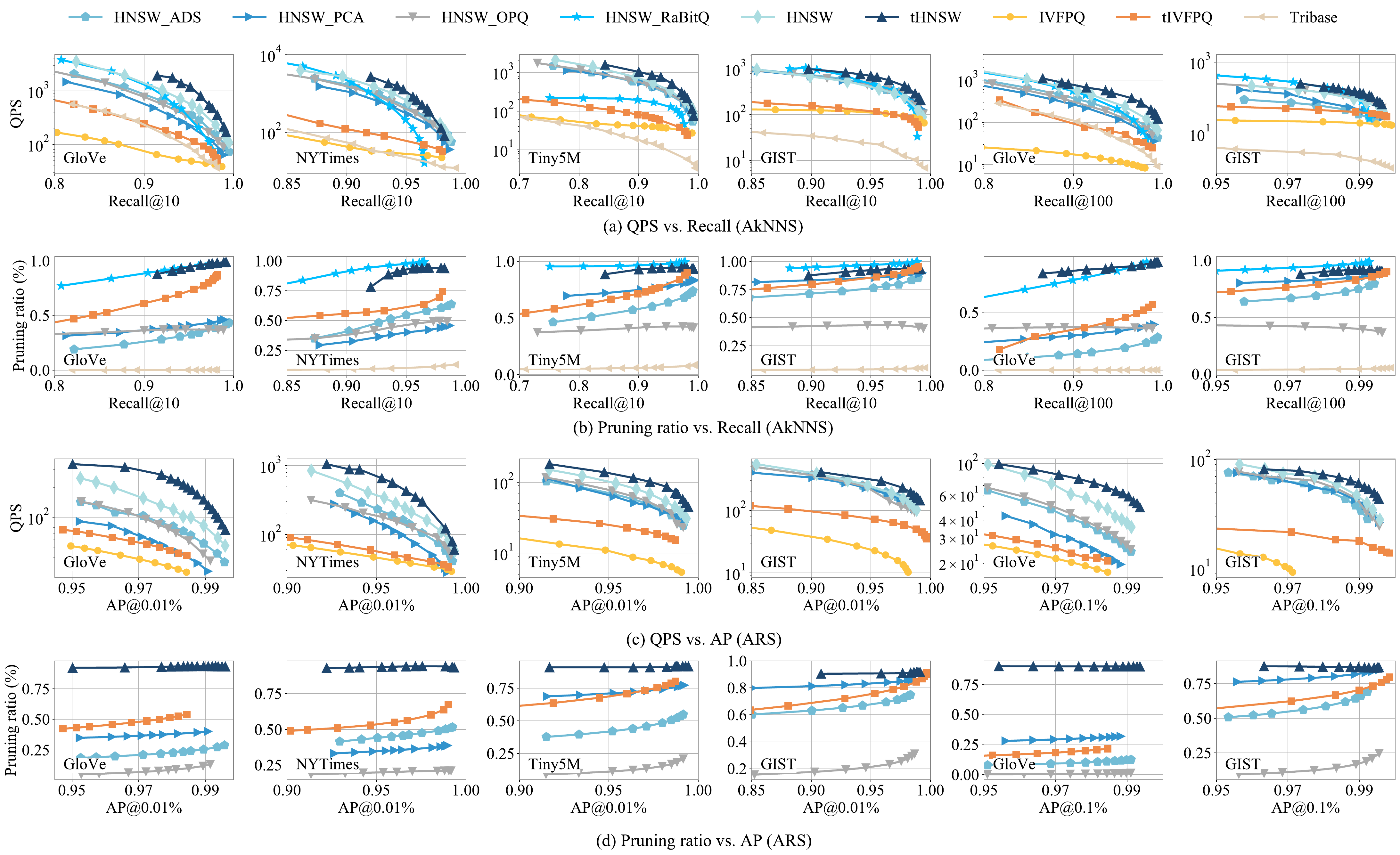}
    \vspace{-0.4cm}
    \caption{Overall query performance of memory-based methods}
    \vspace{-0.2cm}
    \label{fig:memoryperformance}
\end{figure*}

\noindent \textbf{Metrics.} We evaluate HVSS query performance in terms of both efficiency and accuracy. For efficiency, we report queries per second (QPS) in in-memory experiments, and both mean I/Os and QPS in disk-based settings~\cite{diskann, starling}. For accuracy, we follow prior work~\cite{starling}, using average recall (Recall@$k$) for A$k$NNS queries and average precision (AP@$e\%$) for ARS queries, as defined in Section~\ref{sec: problemDef}. To assess the effectiveness of DCOs, we measure the pruning ratio, as well as the number of estimated and exact distance calculations (EDC and DC, respectively). ADS and PCA compute pruning ratios at the dimension level, whereas \ourMethod, \Tribase, OPQ, and RaBitQ operate at the vector level. Additionally, we assess the tightness of \ourMethod’s lower bound by analyzing the ratio and difference between the bound and the actual distance, as detailed in Section~\ref{sec: expUnity}.

\noindent \textbf{Parameter setting.} 
Unless otherwise stated, we adopt the following default parameters. For \HNSW and \ourHNSW, we set $M = 16$ and $efConstruction = 500$, following~\cite{HNSW}. For the \ourMethod component, we set $p = 1$ by default (with $\gamma$ auto-derived), and configure PQ parameters for landmark generation based on~\cite{meta-faiss} and empirical results: $C = 256$, and $m = d/8$ for \gist, $d/4$ for other datasets (see Figure~\ref{fig:parameterM}).
For \IVFPQ and \ourIVFPQ, we use $C' = 4096$ clusters and $C = 256$ subspace clusters, following~\cite{RaBitQ, DCOBenchmark, meta-faiss}. The number of subspaces $m$ is chosen based on Figure~\ref{fig:parameterM} and standard PQ configurations~\cite{meta-faiss}.

In \IVFPQfs and \ourIVFPQfs, we use 4-bit codes and set $m = d/2$~\cite{meta-faiss}. For RaBitQ-based methods, we use 8-bit codes for full distance computation and 1-bit codes for pruning (with $m = d$ fixed and not tunable). During querying, we set $k = 10$ or $100$ for A$k$NNS, and choose the radius for ARS such that 0.01\% or 0.1\% of data vectors fall within the range. The search queue size $ef$ and accessed cluster count $nprobe$ are tuned to balance accuracy and efficiency, with the default value ensuring 0.99 recall. Other methods follow the default settings in the benchmark~\cite{DCOBenchmark}.


\noindent \textbf{Implementations.} 
All methods are implemented in C++ with SIMD optimizations \textbf{\textit{opened}}. The baselines and our method are implemented using the libraries~\cite{Fudistlib, meta-faiss, Tribaselib, RaBitQlib, HNSWlib} and~\cite{HNSWlib}, respectively. In-memory experiments are conducted on a Linux server with 512GB memory and an Intel(R) Xeon(R) Gold 6342 CPU @ 2.80GHz processor, while disk-resident experiments are performed on a machine with Intel(R) Core(TM) i7-8700 CPU @ 3.20GHz, 12 vCores, 62GB memory, and a 256GB SAMSUNG NVMe SSD. We utilize all available threads to build the index and a single thread for query execution, in line with existing work~\cite{NSG, NHQ}. 

\subsection{Comparison of Memory-Based Methods}
\label{exp: expMemory}

\subsubsection{HNSW-Based Methods}
\label{exp: expMemoryHNSW}
For A$k$NNS queries, Figure~\ref{fig:memoryperformance}(a)--(b) depicts the trade-offs among QPS, pruning ratios, and query recall (Recall@10 and Recall@100) across different datasets. Among HNSW-based methods, not all variants outperform the original \HNSW when SIMD is enabled. In contrast, \ourHNSW consistently achieves the best overall performance. At 99\% recall for $k=10$, it improves QPS over \HNSW by 56\%, 16\%, 27\%, and 91\% on \glove, \nyt, \ti, and \gist, respectively. For $k=100$, it yields 97\% and 88\% gains on \glove and \gist. These improvements are accompanied by high pruning ratios, up to 99.3\%.

\begin{figure}
    \centering
    \includegraphics[width=\linewidth]{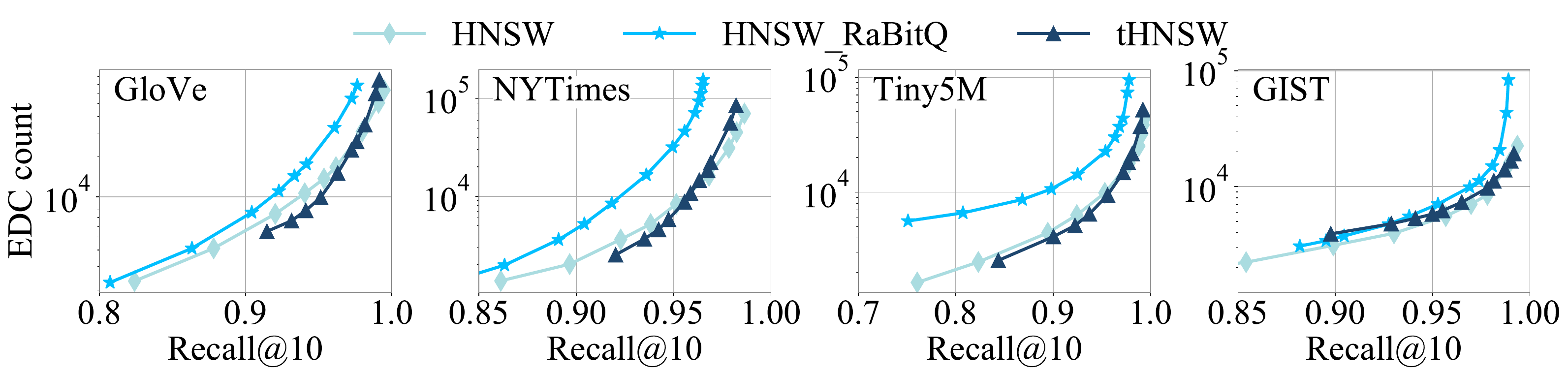}
    \caption{Comparison of estimated distance calculations}
    \label{fig:EDC}
\end{figure}

Notably, at 97\% recall for $k=10$, \HNSWRaBitQ is 77\%, 87\%, 74\%, and 35\% slower than \ourHNSW on \glove, \nyt, \ti, and \gist, respectively. This is mainly because it accesses 2.6$\times$ more vectors for estimated distance calculations (EDCs), as shown in Figure~\ref{fig:EDC}, while \ourHNSW accesses a comparable number to \HNSW. Moreover, we find that the 1-bit pruning in RaBitQ is often too aggressive and its 8-bit codes for full distance calculation are less accurate than exact distances, leading to high pruning ratios but reduced recall (typically $\leq$ 0.98).

To clarify computational cost, Figure~\ref{fig:DC}(a) compares the average number of full distance calculations (DCs) versus recall. Except on \gist, \ourHNSW consistently requires the fewest DCs, consistent with its superior QPS. On \gist, although \HNSWRaBitQ reduces DCs by 62\% via more aggressive pruning, it incurs 54\% more EDCs, resulting in comparable QPS to \ourHNSW.

For ARS queries, Figure~\ref{fig:memoryperformance}(c)--(d) shows the QPS, pruning ratio, and accuracy trade-offs (AP@0.01\% and AP@0.1\%). Again, \ourHNSW achieves the best performance across all datasets. At 99\% accuracy with $e=0.01\%$, it outperforms the best alternative by 72\%, 11\%, 52\%, and 53\% on \glove, \nyt, \ti, and \gist, respectively. For $e=0.1\%$, it yields gains of 52\% and 13\% on \glove and \gist. The pruning ratio remains above 99\% across all datasets. Figure~\ref{fig:DC}(b) further confirms that \ourHNSW requires the fewest DCs, reinforcing its superior efficiency.

\begin{figure}
    \centering
    \includegraphics[width=\linewidth]{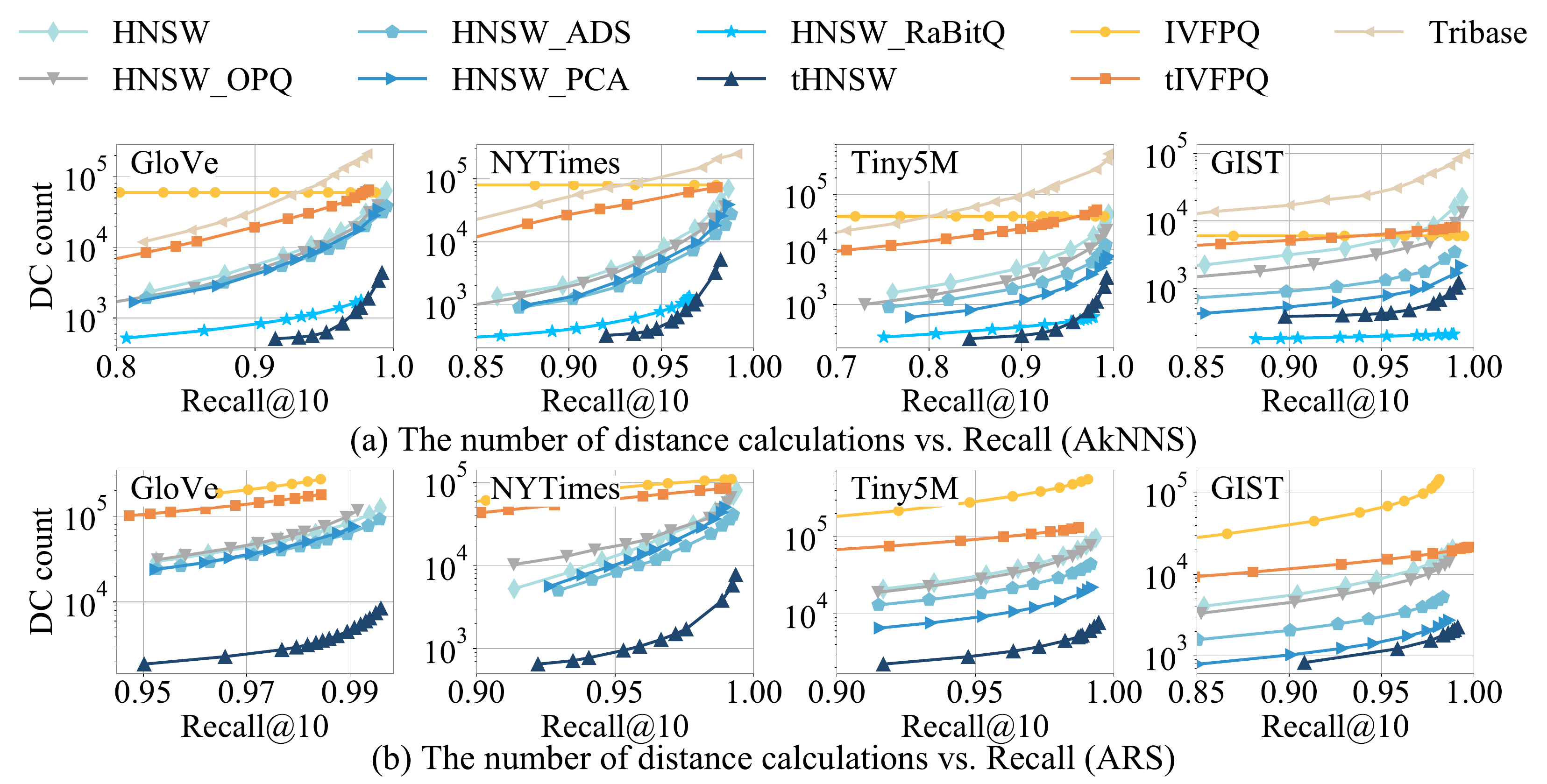}
    \caption{Comparison of distance calculations}
    \label{fig:DC}
\end{figure}

\begin{figure}
    \centering
    \includegraphics[width=\linewidth]{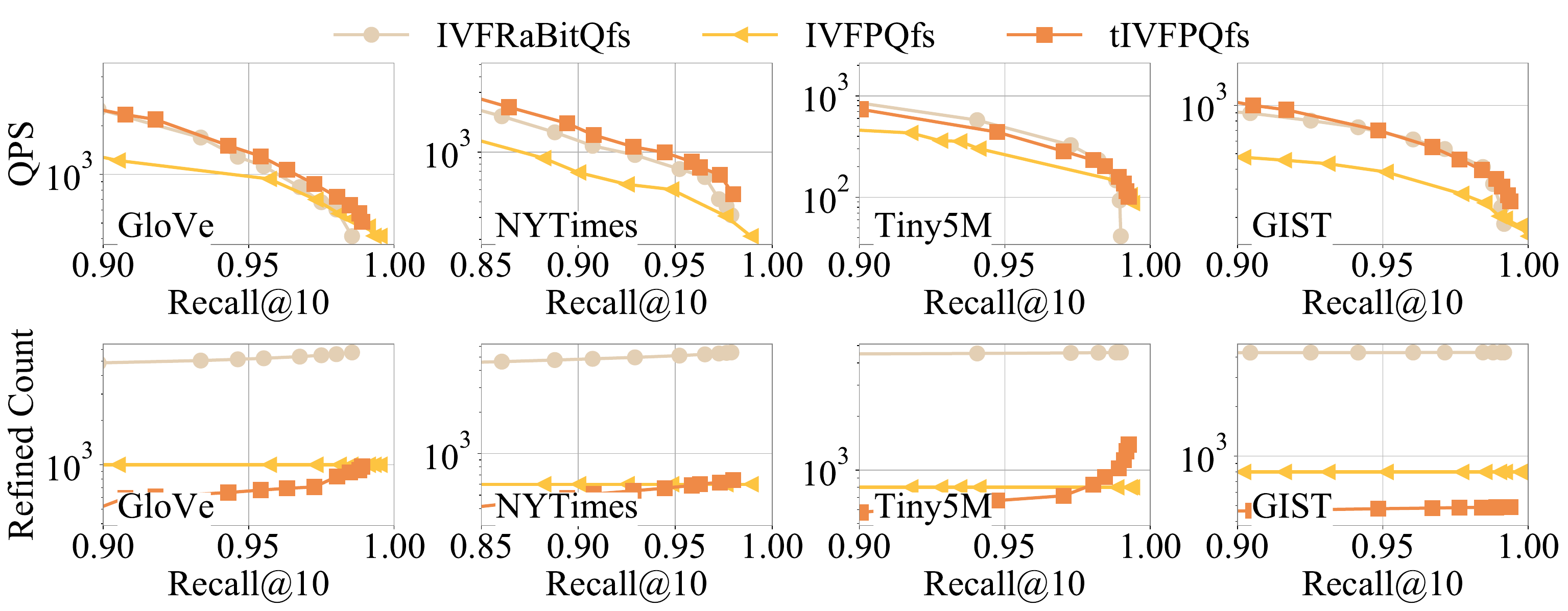}
    \caption{A$k$NNS performance of FastScan-based methods}
    \label{fig:FastScan}
\end{figure}

\subsubsection{IVF- or PQ-Based Methods}
\label{exp: expMemoryPQ}
For A$k$NNS queries, \ourIVFPQ shows substantial improvements over the original \IVFPQ. At 95\% recall for $k=10$, \ourIVFPQ improves by $142\%$, $88\%$, $25\%$, and $9\%$ across the four datasets, respectively. For $k=100$, the performance gains on the \glove and \gist datasets further increase to $281\%$ and $147\%$. Compared with \Tribase, our method achieves over 100$\times$ higher pruning ratios, benefiting from our two key improvements on pruning. In terms of QPS, \ourIVFPQ outperforms \Tribase by 20\%, 140\%, 436\%, and 427\% on the four datasets at 95\% recall for $k=10$.

Given that quantization-based methods are often accelerated using FastScan, we also evaluate our FastScan-compatible variant \ourIVFPQfs against other FastScan-based baselines, as shown in Figure~\ref{fig:FastScan}. \ourIVFPQfs outperforms \IVFPQfs by $27\%$, $50\%$, $15\%$, and $53\%$ across the four datasets, and achieves performance comparable to \IVFRaBitQ. However, we observe that \IVFRaBitQ typically refines over $5\times$ more candidates than \ourIVFPQfs. Its efficiency is largely attributed to the use of 8-bit codes for full distance calculations, which is significantly faster than the exact distance refinement used by \ourIVFPQfs. 
Nonetheless, this advantage may diminish when data access becomes expensive (e.g., disk-based storage) or when higher recall targets ($\geq 0.98$) are required.

For ARS queries, \ourIVFPQ also shows substantial gains over \IVFPQ. When $e = 0.01\%$ and accuracy reaches 95\%, it improves QPS by $50\%$, $25\%$, $150\%$, and $223\%$ across the four datasets. At $e = 0.1\%$, the improvements on \glove and \gist are $19\%$ and $115\%$, respectively.



\subsubsection{Index performance}
\label{exp: expMemoryIndex}
Table~\ref{tab:memBuild} summarizes the build time and index size for memory-based methods. Compared to \HNSW, \ourHNSW incurs 42\%, 60\%, 15\%, and 18\% more build time on the four datasets, primarily due to the additional PQ encoding required for landmark generation. In contrast, \ourIVFPQ introduces minimal overhead and maintains a build time nearly identical to \IVFPQ.
In terms of index size, \ourHNSW consumes 5\%, 6\%, 7\%, and 6\% more memory than \HNSW, mainly for storing distances between data vectors and their landmarks. Similarly, \ourIVFPQ increases the index size by 5\%, 12\%, 9\%, and 6\% compared to \IVFPQ. Notably, \IVFRaBitQ requires on average $3.7\times$ more storage than \ourIVFPQ due to its use of full $d$-dimensional 8-bit encodings.

\begin{table}[h]
  \centering
  \vspace{-0.1cm}
  \caption{Build time and memory overhead}
  \vspace{-0.1cm}
  \label{tab:memBuild}
  \resizebox{\columnwidth}{!}{
  \begin{tabular}{c|c|c|c|c|c|c|c|c}
    \toprule
    \multirow{2}{*}{Method} & \multicolumn{4}{c|}{Build Time (s)} & \multicolumn{4}{c}{Index Size (MB)}\cr\cline{2-9}
    & GloVe & NYTimes & Tiny5M & GIST & GloVe & NYTimes & Tiny5M & GIST \cr
    \hline\hline
    \HNSW & 54 & 23 & 565 & 197 
    & 619 & 324 & 8032 & 3803 \\
    \hline
    \ourHNSW & 77 & 37 & 651 & 233 
    & 653 & 344 & 8567 & 4026 \\
    \hline
    \HNSWADS & 57 & 24 & 767 & 249
    & 619 & 324 & 8032 & 3803 \\
    \hline
    \HNSWOPQ & 230 & 119 & 2616 & 857
    & 641 & 334 & 8204 & 3841 \\
   \hline
    \HNSWPCA & 56 & 26 & 810 & 275
    & 619 & 324 & 8032 & 3803 \\
    \hline
    \HNSWRaBitQ & 62 & 21 & 671 & 224
    & 335 & 118 & 2635 & 1077 \\
    \hline
     \IVFPQ(fs) & 31 & 13 & 100 & 98
    & 39 & 25 & 461 & 253 \\
    \hline
     \ourIVFPQ(fs) & 31 & 13 & 102 & 98 
     & 41 & 28 & 503 & 267 \\
     \hline
     \IVFRaBitQ & 585 & 357 & 1872 & 4169 & 176 & 85 & 1956 & 962 \\
    \hline
     \Tribase & 85 & 49 & 470 & 457
    & 304 & 77 & 1283 & 270 \\
    \bottomrule
\end{tabular}}
\vspace{-0.3cm}
\end{table}


\subsection{Comparison of Disk-Based Methods}
\label{sec: expDisk}
\subsubsection{A$k$NNS query performance}
Figure~\ref{fig:diskperformance}(a) compares QPS, mean I/Os, and Recall@100 across datasets. On \cohere, \ourDiskMethod outperforms \DiskANN and \Starling by 48\% in QPS while reducing I/O by 25\%. Similar improvements are seen on \openai, with 49\% higher QPS and 26\% fewer I/Os. The advantage is even more notable at high accuracy ($\geq$ 99\% recall), where \ourDiskMethod achieves 102\% higher QPS and 58\% fewer I/Os on \cohere. These gains come from reducing raw vector reads through a decoupled data layout and optimized query strategy, enabling efficient high-recall search.

\begin{figure}[t]
    \centering
    \includegraphics[width=\linewidth]{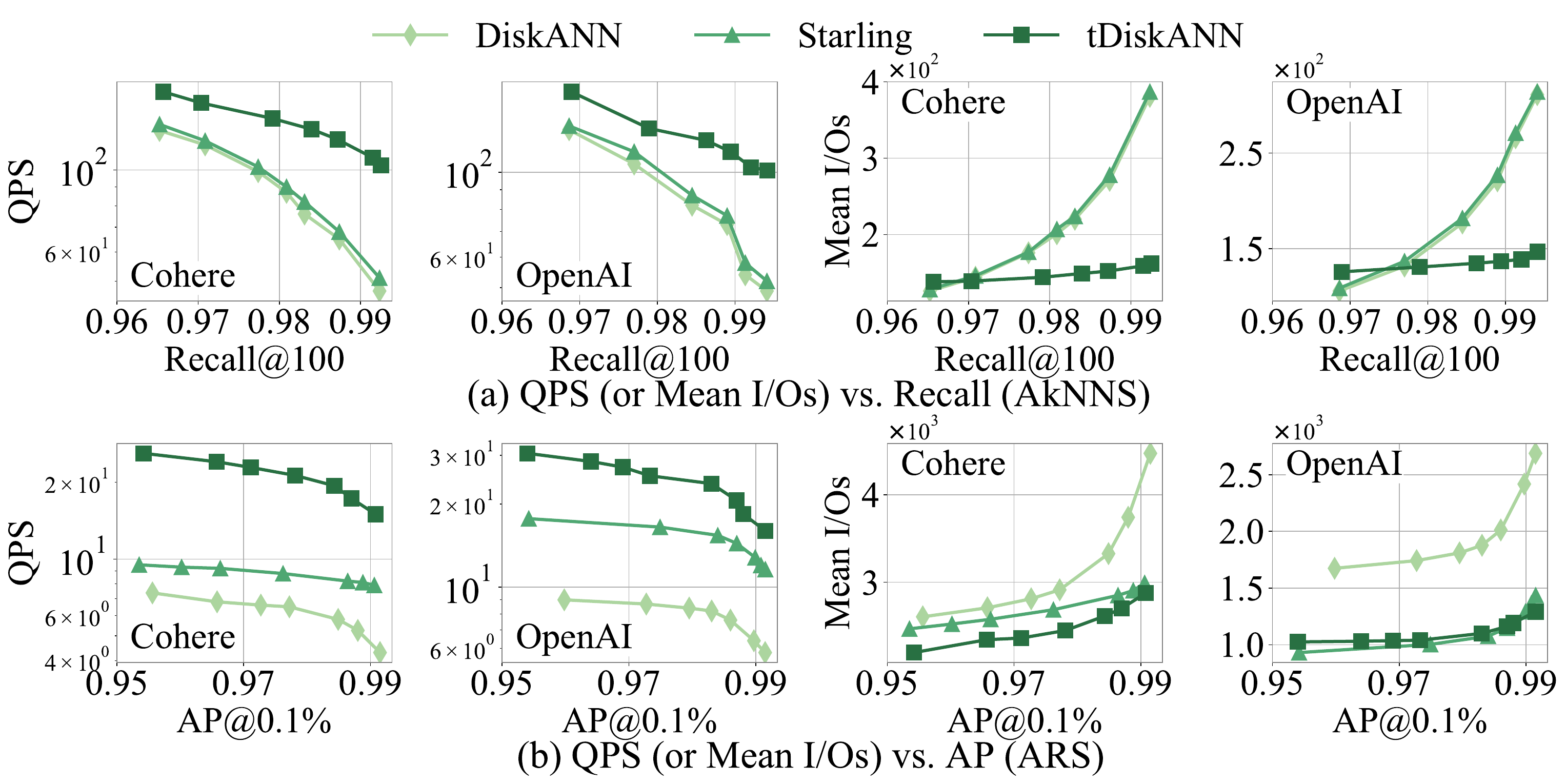}
    \caption{Overall query performance of disk-based methods}
    \label{fig:diskperformance}
\vspace{-0.1cm}
\end{figure}
\vspace{-0.1cm}

\subsubsection{ARS query performance.} Figure~\ref{fig:diskperformance}(b) shows QPS and mean I/Os against AP@0.1\%. On \cohere, \ourDiskMethod improves QPS by 242\% and 137\% over \DiskANN and \Starling, with I/O reductions of 20\% and 8\%, respectively. On \openai, it yields 221\% and 73\% QPS gains with 45\% and 6\% fewer I/Os. Although I/O savings over \Starling are modest, the QPS improvement remains substantial thanks to \ourDiskMethod’s one-pass candidate selection, in contrast to Starling’s multi-round refinement.

\begin{table}[h]
  \centering
  \vspace{-0.1cm}
  \caption{Build time and disk overhead}
  \vspace{-0.1cm}
  \label{tab:diskBuild}
  \resizebox{0.7\columnwidth}{!}{
  \begin{tabular}{c|c|c|c|c}
    \toprule
    \multirow{2}{*}{Method} & \multicolumn{2}{c|}{Build Time (s)} & \multicolumn{2}{c}{Disk Overhead (MB)}\cr\cline{2-5}
    & Cohere & OpenAI  & Cohere & OpenAI \cr
    \hline \hline
    \DiskANN & 1892 & 1623 & 8572 & 8555 \\
    \hline
    \Starling & 1892 & 1623 & 8572 & 8555 \\
    \hline
    \ourDiskMethod & 1917 & 1633 & 8971 & 8750 \\
    \bottomrule
\end{tabular}}
\end{table}

\subsubsection{Index performance.} Table~\ref{tab:diskBuild} reports index build time and disk usage. \ourDiskMethod has similar build time to \DiskANN and \Starling, with primary cost arising from PG and PQ construction. It uses slightly more storage, i.e., 5\% more on \cohere and 2\% on \openai, due to its decoupled layout. Given the large capacity of modern disks, these few extra storage overheads are perfectly acceptable.

\subsection{Scalability}
\label{exp: scalability}
Figure~\ref{fig:scalability}(a)--(b) illustrate how the query efficiency and pruning ratio of \ourHNSW and \ourIVFPQ scale with dataset size on \siftTen. As data size increases, QPS decreases roughly linearly. A 5$\times$ growth in data leads to only a 47\% and 79\% QPS drop for \ourHNSW and \ourIVFPQ, respectively. Meanwhile, the pruning ratio remains stable, demonstrating \ourMethod's scalability and consistent pruning effectiveness.

Figure~\ref{fig:scalability}(c)--(d) show the performance of \ourDiskMethod on \siftH as data volume grows. Both QPS and mean I/Os scale near-linearly. With a 5$\times$ larger dataset, QPS drops just 36\%, and I/Os rise by only 29\%, confirming that \ourDiskMethod maintains high throughput and low I/O overhead even at large scale.

\begin{figure}[H]
    \centering
    \includegraphics[width=\linewidth]{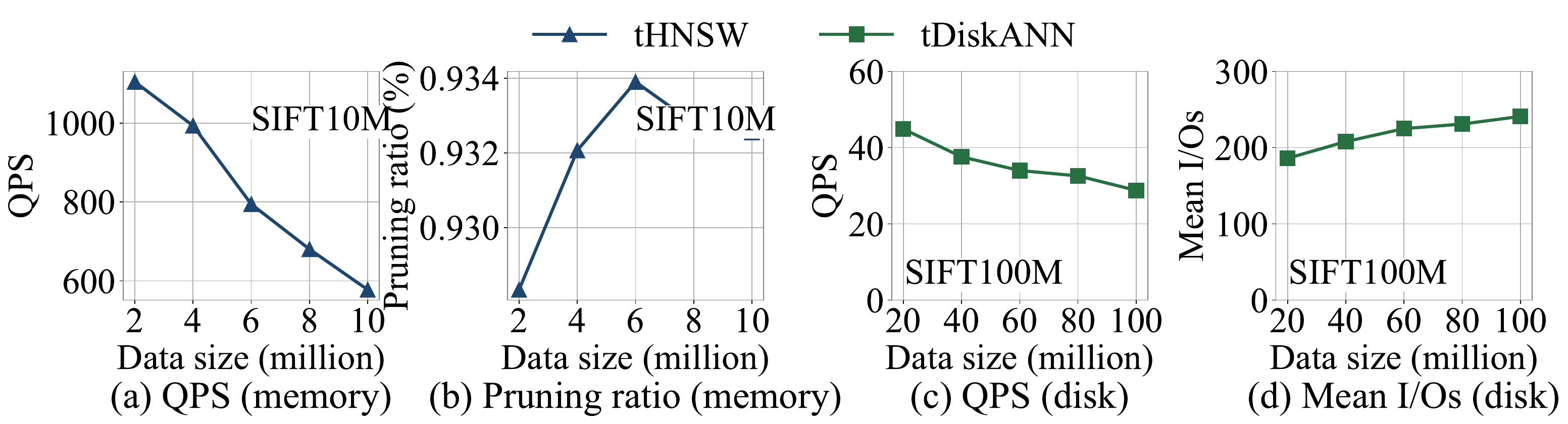}
    \caption{Scalability}
    \label{fig:scalability}
\end{figure}
\vspace{-0.1cm}

\subsection{Pruning Performance of \ourMethod}
\label{sec: expUnity}

\noindent
\textbf{Evaluation of landmark optimization.} Figure~\ref{fig:pruning}(a) reports the tightness of lower bounds from various landmark strategies, measured as the ratio between the lower bound and exact distance. We compare our approach with three baselines: \textbf{Random}~\cite{Tri4, Tri6}, which selects landmarks randomly; \textbf{Distancing}~\cite{Tri6, yao2011}, which maximizes inter-landmark distances; and \textbf{Clustering}~\cite{Tribase}, which uses cluster centroids as landmarks (detailed in Section~\ref{sec: landmark}). All methods apply the strict triangle inequality for lower bound calculation. On the \nyt and \glove datasets, \ourMethod achieves the highest tightness, i.e., 52\% and 75\% of the exact distance, respectively, surpassing baselines by up to 325\%. This confirms the superiority of our landmark design in producing tighter bounds.

\noindent
\textbf{Evaluating $p$-relaxed lower bounds.} Figure~\ref{fig:pruning}(b) compares the ratios of the strict lower bound from the triangle inequality and the $p$-relaxed lower bound to the exact distance, with $p$ set to 1. On the \nyt and \glove datasets, the $p$-relaxed lower bound improves tightness by 77\% and 21\% over the strict lower bound, respectively. This improvement significantly enhances the pruning effectiveness of triangle-inequality-based lower bounds, enabling more effective application of triangle-inequality-based pruning in high-dimensional spaces.

\begin{figure}[H]
    \centering
    \includegraphics[width=\linewidth]{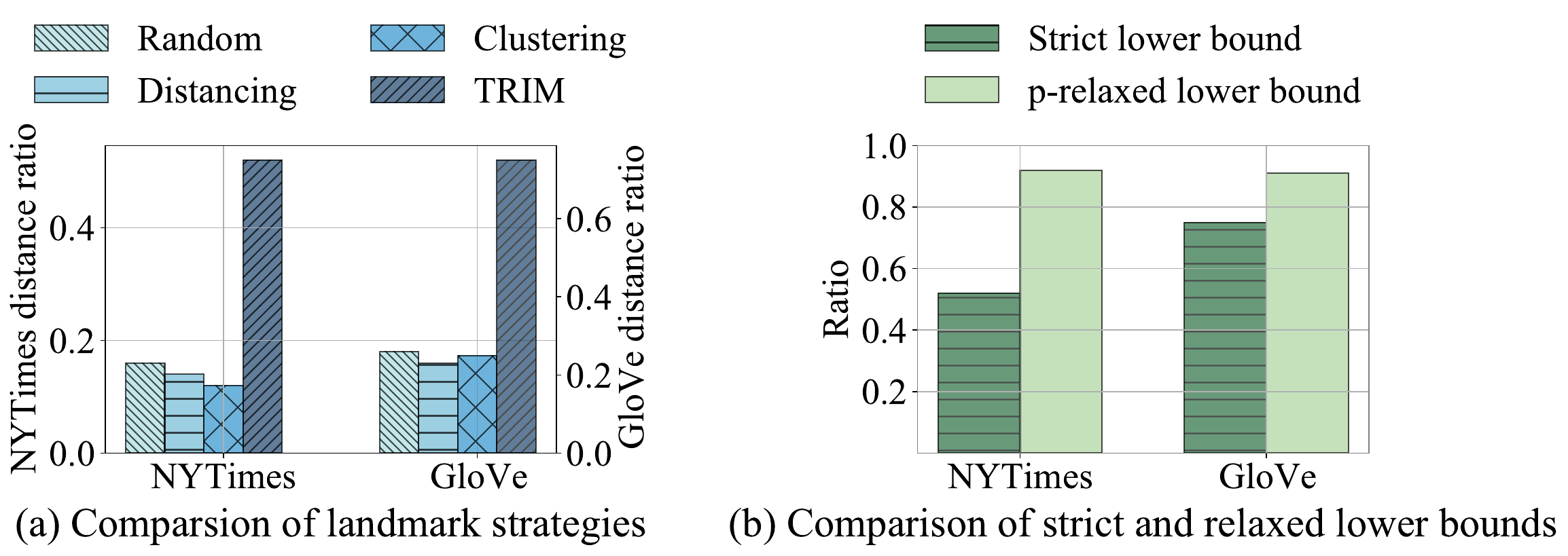}
    \caption{Evaluation of two improvements on \ourMethod}
    \label{fig:pruning}
\end{figure}
\vspace{-0.2cm}

\noindent
\textbf{Ablation study.} Figure~\ref{fig:ablation} reports the QPS and recall when key components of \ourMethod are removed. For \ourHNSW, replacing PQ landmarks with random ones and $p$-LBF with strict bounds leads to average QPS drops of 56\% and 64\%, respectively. For \ourIVFPQ, we only replace $p$-LBF (as PQ is necessary for distance estimation), resulting in an average 63\% drop across all datasets. These results show that both landmark strategy and $p$-relaxed lower bounds are essential and removing either severely degrades performance.

\begin{figure}[H]
    \centering
    \includegraphics[width=\linewidth]{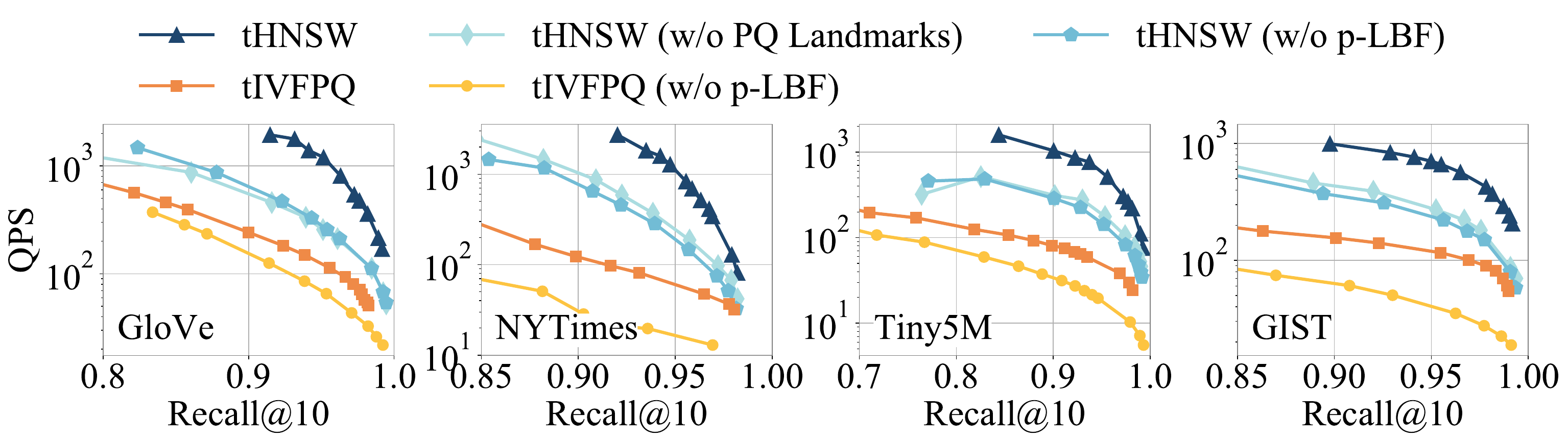}
    \caption{Ablation Study}
    \label{fig:ablation}
\vspace{-0.2cm}
\end{figure}
\vspace{-0.2cm}

\noindent
\textbf{Evaluating the impact of $p$ on query accuracy and $\gamma$.} 
Figures~\ref{fig:parameters}(a)-(b) examine how $p$ affects $\gamma$ and recall. As described in Section~\ref{sec: pLB}, increasing $p$ lowers $\gamma$ and query efficiency while improving recall. On the \nyt dataset (with standard normal queries), $p=0.95$ gives $\gamma=0.85$ and recall of 0.977, while $p=0.97$ yields $\gamma=0.81$ and recall of 0.98. For the \glove dataset (without explicit distribution), $p=0.95$ results in $\gamma=0.8$ and recall 0.97, and $p=0.97$ lowers $\gamma$ to 0.77 with recall 0.98. These results highlight the flexibility of \ourMethod in trading off recall and efficiency.

\begin{figure}[t]
    \centering
    \includegraphics[width=\linewidth]{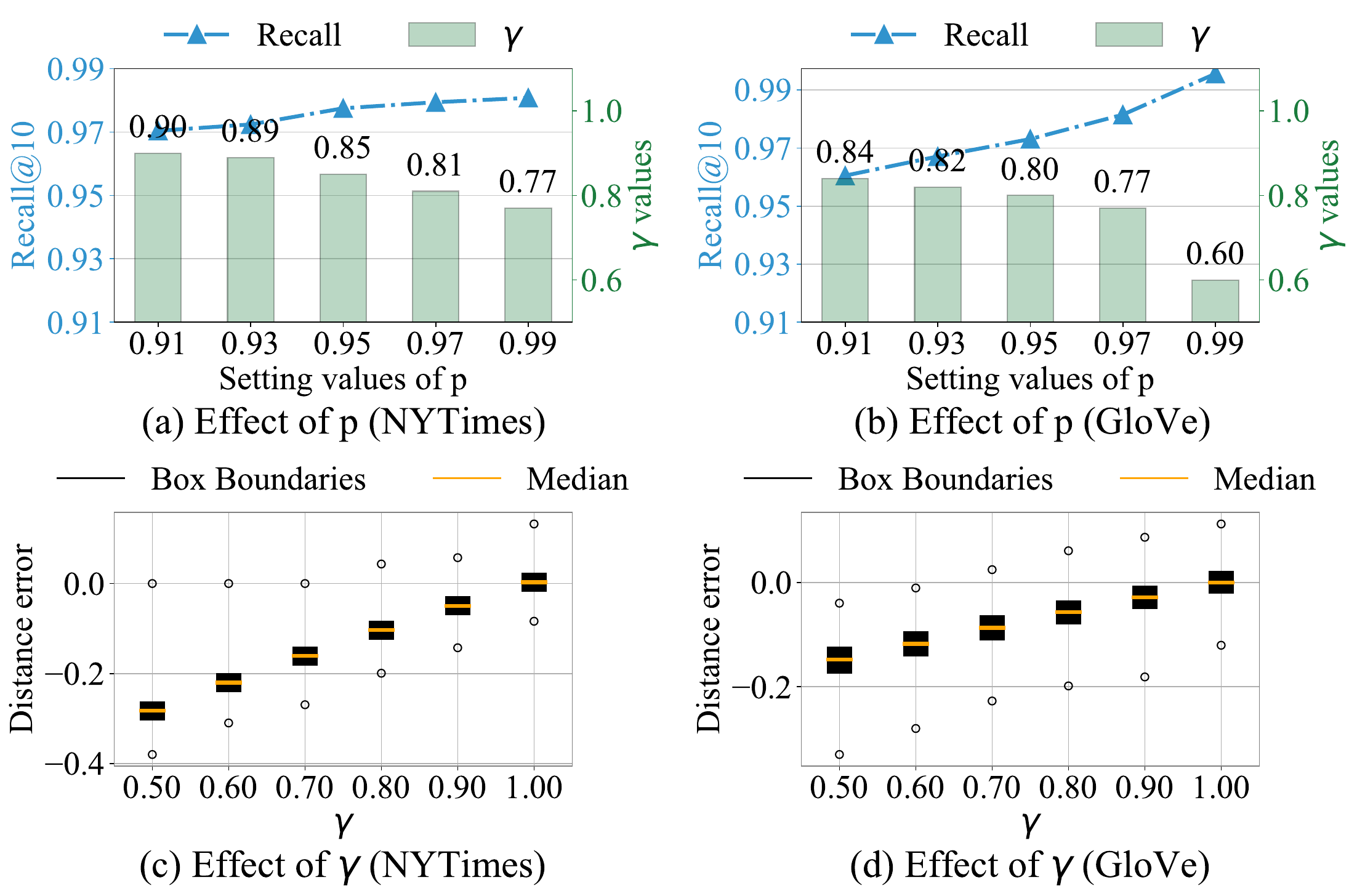}
    \caption{Evaluating the effect of $p$ and $\gamma$}
    \label{fig:parameters}
\end{figure}

\noindent
\textbf{Evaluating the impact of $\gamma$ on distance error.} 
Figures~\ref{fig:parameters}(c)-(d) show how varying $\gamma$ affects the gap between the $p$-relaxed lower bound and the exact distance. As $\gamma$ increases, the lower bound becomes more aggressive (i.e., error becomes positive), increasing the risk of mistakenly pruning correct results. On \nyt, setting $\gamma=0.8$ causes the lower bound to slightly exceeds the true distance, which may affect recall. In contrast, using $\gamma<0.8$ ensures zero overestimation while still tightening the bound. Similar trends are observed on \glove. These results underscore the importance of tuning $\gamma$ (i.e., $p$) to balance pruning aggressiveness and accuracy.

\begin{figure}
    \centering
    \includegraphics[width=\linewidth]{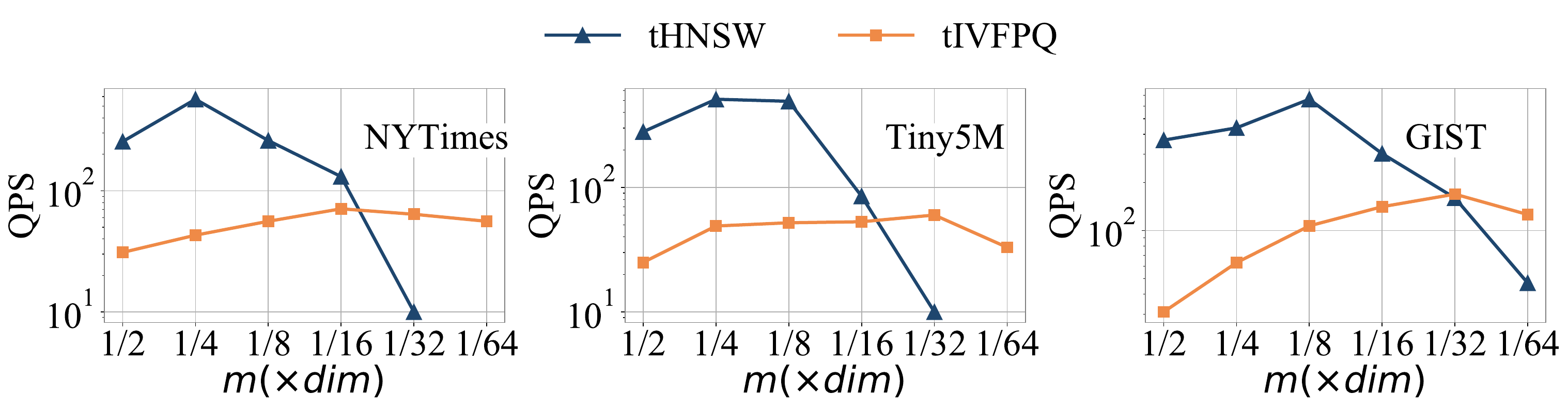}
    \caption{Evaluating the effect of parameter $m$}
    \label{fig:parameterM}
\end{figure}

\noindent
\textbf{Evaluating the impact of $m$ on \ourMethod.} Figure~\ref{fig:parameterM} illustrates how QPS varies with the parameter $m$ used for generating PQ landmarks, where $m$ is set to $1/2$, $1/4$, $1/8$, $1/16$, $1/32$, and $1/64$ of the data dimensionality $dim$. For \ourHNSW, $m = dim/4$ yields the best performance on \nyt and \ti, while $dim/8$ is optimal for \gist. For \ourIVFPQ, query efficiency is less sensitive to $m$; $dim/16$ performs best on \nyt, and $dim/32$ is optimal for both \ti and \gist.

\begin{figure}
    \centering
    \includegraphics[width=\linewidth]{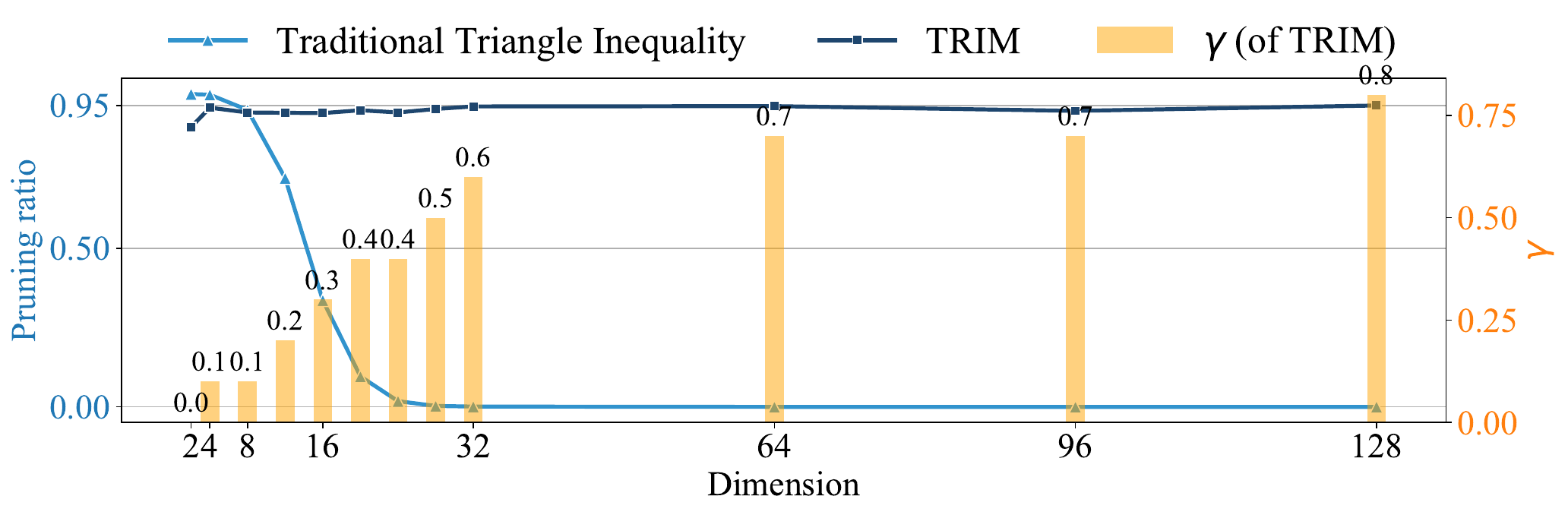}
    \caption{Comparison with traditional pruning}
    \label{fig:prune}
\end{figure}

\noindent
\textbf{Dimensionality testing}. Figure~\ref{fig:prune} compares the pruning effectiveness of \ourMethod with the traditional method across varying dimensionalities. Traditional methods degrade rapidly with increasing dimensionality due to the “distance concentration” phenomenon. In contrast, \ourMethod leverages optimized landmarks and $p$-LBF (with adaptive $\gamma$) to maintain both tight lower bounds and query accuracy. Empirically, the traditional method performs better when dimensionality is below 8 due to its simplicity and higher pruning ratio, but \ourMethod consistently outperforms it beyond this point.


\section{Related Work}
\noindent
\textbf{Triangle-inequality-based search algorithms}.
The triangle inequality is a key tool for enhancing query efficiency~\cite{yao2011, Tri1, Tri2, Tri3, Tri4, Tri5, Tri6}. It has been widely applied to problems such as the shortest path query in road networks~\cite{yao2011, Tri1, Tri6}, where algorithms like ALT~\cite{Tri6} improve the traditional A* algorithm~\cite{A*} by incorporating landmarks and triangle inequalities, achieving order-of-magnitude efficiency gains. Additionally, the triangle inequality is employed to accelerate clustering algorithms, such as $k$-Means~\cite{Tri3, ForCluster3} and DBSCAN~\cite{ForCluster1}, achieving speedups of up to 300$\times$. In distributed systems, it also optimizes low-dimensional $k$NNS queries~\cite{Tri2, Tri5}. However, their effectiveness diminishes in high-dimensional settings due to the "curse of dimensionality" phenomenon. 

\noindent
\textbf{High-dimensional vector similarity search algorithms}. HVSS algorithms are often classified into four categories: (1) tree-based~\cite{KDtree, coverTree, Mtree, RandomProj, ScalableNN}, (2) hash-based~\cite{QALSH, srs, idec, TODS, learnToHash}, (3) quantization-based~\cite{pq, OPQ, RaBitQ, CacheLocality, vectorCompression, VQ, IMI}, and (4) proximity-graph-based~\cite{HNSW, NSG, NSSG, tau-MNG, HVS, LSH-APG}. For in-memory storage scenarios, PG-based methods offer the best trade-off between query efficiency and accuracy~\cite{kANNSurvey}, while quantization-based methods, particularly PQ-based approaches, excel in memory-constrained environments. In disk-resident scenarios, PQ- and PG-based methods are often combined~\cite{diskann, starling} to reduce I/O cost and improve search performance.

Reducing the cost of distance calculations has been one of the focuses of HVSS studies. Several distance comparison operations (DCOs) are proposed to solve this problem, which are classified into transformation-based~\cite{DCO, DCO2}, projection-based~\cite{LSH-APG}, quantization-based~\cite{pq}, and geometry-based approaches~\cite{finger}, as summarized in the DCO benchmark~\cite{DCOBenchmark}. Among these, transformation- and quantization-based methods offer the best overall performance in most cases~\cite{DCOBenchmark}. Transformation-based methods accelerate calculations by dynamically adjusting the number of dimensions involved in the distance calculation. However, their dimension-level pruning often results in unnecessary data access overhead and poor SIMD compatibility, limiting their efficiency. Quantization-based methods, on the other hand, reduce computation by approximating vectors using their PQ representations for distance estimation. While efficient, they suffer from low pruning ratios due to the inherent imprecision of PQ. In contrast, our method treats PQ representations as landmarks and prunes vectors using both the fast computation of PQ and geometric principles of triangles for distance correction, achieving both high efficiency and a significantly higher pruning ratio. Tribase~\cite{Tribase} is another triangle-inequality-based pruning method, primarily designed for clustering-based HVSS algorithms, but it faces challenges in extending to other HVSS methods, such as PG- and PQ-based approaches. Moreover, there are also several studies~\cite{vbase, earlytermination} exploring early termination conditions for HVSS, which are orthogonal to ours.


\section{Conclusion and Discussion}
This paper studies triangle-inequality-based pruning for HVSS and highlights its limited effectiveness in high-dimensional spaces. To address this, we introduce a simple and versatile operation, \ourMethod, which improves pruning effectiveness through landmark optimization and $p$-relaxed lower bounds. \ourMethod can be seamlessly integrated into various widely-used HVSS solutions, including in-memory PG- and PQ-based methods, as well as disk-based approaches. 

Moreover, \ourMethod naturally extends to other quantization methods such as RaBitQ. Specifically, the vectors reconstructed from RaBitQ’s 1-bit codes can be viewed as landmarks, and the $p$-relaxed lower bound enhances pruning tightness and flexibility with negligible impact on accuracy, mitigating the overly aggressive and unadjustable pruning behavior of RaBitQ. Combined with RaBitQ’s efficient full distance calculation using 8-bit codes, this integration holds strong potential for further performance improvements. We leave a full investigation to future work.

\bibliographystyle{ACM-Reference-Format}
\balance
\bibliography{reference}


\begin{thebibliography}{66}


\ifx \showCODEN    \undefined \def \showCODEN     #1{\unskip}     \fi
\ifx \showDOI      \undefined \def \showDOI       #1{#1}\fi
\ifx \showISBNx    \undefined \def \showISBNx     #1{\unskip}     \fi
\ifx \showISBNxiii \undefined \def \showISBNxiii  #1{\unskip}     \fi
\ifx \showISSN     \undefined \def \showISSN      #1{\unskip}     \fi
\ifx \showLCCN     \undefined \def \showLCCN      #1{\unskip}     \fi
\ifx \shownote     \undefined \def \shownote      #1{#1}          \fi
\ifx \showarticletitle \undefined \def \showarticletitle #1{#1}   \fi
\ifx \showURL      \undefined \def \showURL       {\relax}        \fi
\providecommand\bibfield[2]{#2}
\providecommand\bibinfo[2]{#2}
\providecommand\natexlab[1]{#1}
\providecommand\showeprint[2][]{arXiv:#2}

\bibitem[Aguerrebere et~al\mbox{.}(2023)]%
        {LVQ}
\bibfield{author}{\bibinfo{person}{Cecilia Aguerrebere}, \bibinfo{person}{Ishwar Bhati}, \bibinfo{person}{Mark Hildebrand}, \bibinfo{person}{Mariano Tepper}, {and} \bibinfo{person}{Ted Willke}.} \bibinfo{year}{2023}\natexlab{}.
\newblock \showarticletitle{Similarity search in the blink of an eye with compressed indices}.
\newblock \bibinfo{journal}{\emph{arXiv preprint arXiv:2304.04759}} (\bibinfo{year}{2023}).
\newblock


\bibitem[AI(2017)]%
        {meta-faiss}
\bibfield{author}{\bibinfo{person}{Meta AI}.} \bibinfo{year}{2017}\natexlab{}.
\newblock \bibinfo{title}{FAISS}.
\newblock \bibinfo{howpublished}{\url{https://ai.facebook.com/tools/faiss}}.
\newblock


\bibitem[Andr{\'e} et~al\mbox{.}(2016a)]%
        {FastScan1}
\bibfield{author}{\bibinfo{person}{Fabien Andr{\'e}}, \bibinfo{person}{Anne-Marie Kermarrec}, {and} \bibinfo{person}{Nicolas Le~Scouarnec}.} \bibinfo{year}{2016}\natexlab{a}.
\newblock \showarticletitle{Cache locality is not enough: High-performance nearest neighbor search with product quantization fast scan}. In \bibinfo{booktitle}{\emph{VLDB}}, Vol.~\bibinfo{volume}{9}. \bibinfo{pages}{12}.
\newblock


\bibitem[Andr{\'e} et~al\mbox{.}(2016b)]%
        {CacheLocality}
\bibfield{author}{\bibinfo{person}{Fabien Andr{\'e}}, \bibinfo{person}{Anne-Marie Kermarrec}, {and} \bibinfo{person}{Nicolas Le~Scouarnec}.} \bibinfo{year}{2016}\natexlab{b}.
\newblock \showarticletitle{Cache locality is not enough: High-performance nearest neighbor search with product quantization fast scan}. In \bibinfo{booktitle}{\emph{VLDB}}, Vol.~\bibinfo{volume}{9}. \bibinfo{pages}{12}.
\newblock


\bibitem[Andr{\'e} et~al\mbox{.}(2019)]%
        {FastScan2}
\bibfield{author}{\bibinfo{person}{Fabien Andr{\'e}}, \bibinfo{person}{Anne-Marie Kermarrec}, {and} \bibinfo{person}{Nicolas Le~Scouarnec}.} \bibinfo{year}{2019}\natexlab{}.
\newblock \showarticletitle{Quicker adc: Unlocking the hidden potential of product quantization with simd}.
\newblock \bibinfo{journal}{\emph{IEEE TPAMI}} \bibinfo{volume}{43}, \bibinfo{number}{5} (\bibinfo{year}{2019}), \bibinfo{pages}{1666--1677}.
\newblock


\bibitem[Babenko and Lempitsky(2014a)]%
        {vectorCompression}
\bibfield{author}{\bibinfo{person}{Artem Babenko} {and} \bibinfo{person}{Victor Lempitsky}.} \bibinfo{year}{2014}\natexlab{a}.
\newblock \showarticletitle{Additive quantization for extreme vector compression}. In \bibinfo{booktitle}{\emph{CVPR}}. \bibinfo{pages}{931--938}.
\newblock


\bibitem[Babenko and Lempitsky(2014b)]%
        {IMI}
\bibfield{author}{\bibinfo{person}{Artem Babenko} {and} \bibinfo{person}{Victor Lempitsky}.} \bibinfo{year}{2014}\natexlab{b}.
\newblock \showarticletitle{The inverted multi-index}.
\newblock \bibinfo{journal}{\emph{IEEE TPAMI}} \bibinfo{volume}{37}, \bibinfo{number}{6} (\bibinfo{year}{2014}), \bibinfo{pages}{1247--1260}.
\newblock


\bibitem[Bentley(1975)]%
        {KDtree}
\bibfield{author}{\bibinfo{person}{Jon~Louis Bentley}.} \bibinfo{year}{1975}\natexlab{}.
\newblock \showarticletitle{Multidimensional binary search trees used for associative searching}.
\newblock \bibinfo{journal}{\emph{Commun. ACM}} \bibinfo{volume}{18}, \bibinfo{number}{9} (\bibinfo{year}{1975}), \bibinfo{pages}{509--517}.
\newblock


\bibitem[Beygelzimer et~al\mbox{.}(2006)]%
        {coverTree}
\bibfield{author}{\bibinfo{person}{Alina Beygelzimer}, \bibinfo{person}{Sham Kakade}, {and} \bibinfo{person}{John Langford}.} \bibinfo{year}{2006}\natexlab{}.
\newblock \showarticletitle{Cover trees for nearest neighbor}. In \bibinfo{booktitle}{\emph{ICML}}. \bibinfo{pages}{97--104}.
\newblock


\bibitem[Chen et~al\mbox{.}(2023)]%
        {finger}
\bibfield{author}{\bibinfo{person}{Patrick Chen}, \bibinfo{person}{Wei-Cheng Chang}, \bibinfo{person}{Jyun-Yu Jiang}, \bibinfo{person}{Hsiang-Fu Yu}, \bibinfo{person}{Inderjit Dhillon}, {and} \bibinfo{person}{Cho-Jui Hsieh}.} \bibinfo{year}{2023}\natexlab{}.
\newblock \showarticletitle{Finger: Fast inference for graph-based approximate nearest neighbor search}. In \bibinfo{booktitle}{\emph{Proceedings of the ACM Web Conference 2023}}. \bibinfo{pages}{3225--3235}.
\newblock


\bibitem[Ciaccia et~al\mbox{.}(1997)]%
        {Mtree}
\bibfield{author}{\bibinfo{person}{Paolo Ciaccia}, \bibinfo{person}{Marco Patella}, {and} \bibinfo{person}{Pavel Zezula}.} \bibinfo{year}{1997}\natexlab{}.
\newblock \showarticletitle{M-tree: An efficient access method for similarity search in metric spaces}. In \bibinfo{booktitle}{\emph{VLDB}}, Vol.~\bibinfo{volume}{97}. \bibinfo{pages}{426--435}.
\newblock


\bibitem[Contributors(2023)]%
        {HNSWlib}
\bibfield{author}{\bibinfo{person}{HNSWlib Contributors}.} \bibinfo{year}{2023}\natexlab{}.
\newblock \bibinfo{title}{HNSWlib}.
\newblock \bibinfo{howpublished}{\url{https://github.com/nmslib/hnswlib}}.
\newblock


\bibitem[Dasgupta and Freund(2008)]%
        {RandomProj}
\bibfield{author}{\bibinfo{person}{Sanjoy Dasgupta} {and} \bibinfo{person}{Yoav Freund}.} \bibinfo{year}{2008}\natexlab{}.
\newblock \showarticletitle{Random projection trees and low dimensional manifolds}. In \bibinfo{booktitle}{\emph{STOC}}. \bibinfo{pages}{537--546}.
\newblock


\bibitem[Deng et~al\mbox{.}(2024)]%
        {DCO2}
\bibfield{author}{\bibinfo{person}{Liwei Deng}, \bibinfo{person}{Penghao Chen}, \bibinfo{person}{Ximu Zeng}, \bibinfo{person}{Tianfu Wang}, \bibinfo{person}{Yan Zhao}, {and} \bibinfo{person}{Kai Zheng}.} \bibinfo{year}{2024}\natexlab{}.
\newblock \showarticletitle{Efficient Data-aware Distance Comparison Operations for High-Dimensional Approximate Nearest Neighbor Search}.
\newblock \bibinfo{journal}{\emph{arXiv preprint arXiv:2411.17229}} (\bibinfo{year}{2024}).
\newblock


\bibitem[Ding et~al\mbox{.}(2017)]%
        {Tri4}
\bibfield{author}{\bibinfo{person}{Yufei Ding}, \bibinfo{person}{Lin Ning}, \bibinfo{person}{Hui Guan}, {and} \bibinfo{person}{Xipeng Shen}.} \bibinfo{year}{2017}\natexlab{}.
\newblock \showarticletitle{Generalizations of the theory and deployment of triangular inequality for compiler-based strength reduction}. In \bibinfo{booktitle}{\emph{ACM SIGPLAN}}. \bibinfo{pages}{33--48}.
\newblock


\bibitem[Elkan(2003)]%
        {Tri3}
\bibfield{author}{\bibinfo{person}{Charles Elkan}.} \bibinfo{year}{2003}\natexlab{}.
\newblock \showarticletitle{Using the triangle inequality to accelerate k-means}. In \bibinfo{booktitle}{\emph{ICML}}. \bibinfo{pages}{147--153}.
\newblock


\bibitem[Fran{\c{c}}ois et~al\mbox{.}(2007)]%
        {disConcen2}
\bibfield{author}{\bibinfo{person}{Damien Fran{\c{c}}ois}, \bibinfo{person}{Vincent Wertz}, {and} \bibinfo{person}{Michel Verleysen}.} \bibinfo{year}{2007}\natexlab{}.
\newblock \showarticletitle{The concentration of fractional distances}.
\newblock \bibinfo{journal}{\emph{IEEE TKDE}} \bibinfo{volume}{19}, \bibinfo{number}{7} (\bibinfo{year}{2007}), \bibinfo{pages}{873--886}.
\newblock


\bibitem[Fu et~al\mbox{.}(2021)]%
        {NSSG}
\bibfield{author}{\bibinfo{person}{Cong Fu}, \bibinfo{person}{Changxu Wang}, {and} \bibinfo{person}{Deng Cai}.} \bibinfo{year}{2021}\natexlab{}.
\newblock \showarticletitle{High dimensional similarity search with satellite system graph: Efficiency, scalability, and unindexed query compatibility}.
\newblock \bibinfo{journal}{\emph{IEEE TPAMI}} \bibinfo{volume}{44}, \bibinfo{number}{8} (\bibinfo{year}{2021}), \bibinfo{pages}{4139--4150}.
\newblock


\bibitem[Fu et~al\mbox{.}(2019)]%
        {NSG}
\bibfield{author}{\bibinfo{person}{Cong Fu}, \bibinfo{person}{Chao Xiang}, \bibinfo{person}{Changxu Wang}, {and} \bibinfo{person}{Deng Cai}.} \bibinfo{year}{2019}\natexlab{}.
\newblock \showarticletitle{Fast approximate nearest neighbor search with the navigating spreading-out graph}. In \bibinfo{booktitle}{\emph{PVLDB}}, Vol.~\bibinfo{volume}{12}. \bibinfo{publisher}{VLDB Endowment}, \bibinfo{pages}{416--474}.
\newblock


\bibitem[Gao et~al\mbox{.}(2025a)]%
        {RaBitQ2}
\bibfield{author}{\bibinfo{person}{Jianyang Gao}, \bibinfo{person}{Yutong Gou}, \bibinfo{person}{Yuexuan Xu}, \bibinfo{person}{Yongyi Yang}, \bibinfo{person}{Cheng Long}, {and} \bibinfo{person}{Raymond Chi-Wing Wong}.} \bibinfo{year}{2025}\natexlab{a}.
\newblock \showarticletitle{Practical and asymptotically optimal quantization of high-dimensional vectors in euclidean space for approximate nearest neighbor search}.
\newblock \bibinfo{journal}{\emph{ACM SIGMOD}} \bibinfo{volume}{3}, \bibinfo{number}{3} (\bibinfo{year}{2025}), \bibinfo{pages}{1--26}.
\newblock


\bibitem[Gao et~al\mbox{.}(2025b)]%
        {RaBitQlib}
\bibfield{author}{\bibinfo{person}{Jianyang Gao}, \bibinfo{person}{Yutong Gou}, \bibinfo{person}{Yuexuan Xu}, \bibinfo{person}{Yongyi Yang}, \bibinfo{person}{Cheng Long}, {and} \bibinfo{person}{Raymond Chi-Wing Wong}.} \bibinfo{year}{2025}\natexlab{b}.
\newblock \bibinfo{title}{RaBitQ-Library}.
\newblock \bibinfo{howpublished}{\url{https://github.com/VectorDB-NTU/RaBitQ-Library}}.
\newblock


\bibitem[Gao and Long(2023)]%
        {DCO}
\bibfield{author}{\bibinfo{person}{Jianyang Gao} {and} \bibinfo{person}{Cheng Long}.} \bibinfo{year}{2023}\natexlab{}.
\newblock \showarticletitle{High-dimensional approximate nearest neighbor search: with reliable and efficient distance comparison operations}.
\newblock \bibinfo{journal}{\emph{ACM SIGMOD}} \bibinfo{volume}{1}, \bibinfo{number}{2} (\bibinfo{year}{2023}), \bibinfo{pages}{1--27}.
\newblock


\bibitem[Gao and Long(2024)]%
        {RaBitQ}
\bibfield{author}{\bibinfo{person}{Jianyang Gao} {and} \bibinfo{person}{Cheng Long}.} \bibinfo{year}{2024}\natexlab{}.
\newblock \showarticletitle{RaBitQ: Quantizing High-Dimensional Vectors with a Theoretical Error Bound for Approximate Nearest Neighbor Search}.
\newblock \bibinfo{journal}{\emph{ACM SIGMOD}} \bibinfo{volume}{2}, \bibinfo{number}{3} (\bibinfo{year}{2024}), \bibinfo{pages}{1--27}.
\newblock


\bibitem[Ge et~al\mbox{.}(2013)]%
        {OPQ}
\bibfield{author}{\bibinfo{person}{Tiezheng Ge}, \bibinfo{person}{Kaiming He}, \bibinfo{person}{Qifa Ke}, {and} \bibinfo{person}{Jian Sun}.} \bibinfo{year}{2013}\natexlab{}.
\newblock \showarticletitle{Optimized product quantization for approximate nearest neighbor search}. In \bibinfo{booktitle}{\emph{IEEE TPAMI}}. \bibinfo{pages}{2946--2953}.
\newblock


\bibitem[Goldberg and Harrelson(2005)]%
        {Tri6}
\bibfield{author}{\bibinfo{person}{Andrew~V Goldberg} {and} \bibinfo{person}{Chris Harrelson}.} \bibinfo{year}{2005}\natexlab{}.
\newblock \showarticletitle{Computing the shortest path: A search meets graph theory.}. In \bibinfo{booktitle}{\emph{SODA}}, Vol.~\bibinfo{volume}{5}. \bibinfo{pages}{156--165}.
\newblock


\bibitem[Gong et~al\mbox{.}(2020)]%
        {idec}
\bibfield{author}{\bibinfo{person}{Long Gong}, \bibinfo{person}{Huayi Wang}, \bibinfo{person}{Mitsunori Ogihara}, {and} \bibinfo{person}{Jun Xu}.} \bibinfo{year}{2020}\natexlab{}.
\newblock \showarticletitle{iDEC: indexable distance estimating codes for approximate nearest neighbor search}.
\newblock \bibinfo{journal}{\emph{PVLDB}} \bibinfo{volume}{13}, \bibinfo{number}{9} (\bibinfo{year}{2020}).
\newblock


\bibitem[Guo et~al\mbox{.}(2020)]%
        {VQ}
\bibfield{author}{\bibinfo{person}{Ruiqi Guo}, \bibinfo{person}{Philip Sun}, \bibinfo{person}{Erik Lindgren}, \bibinfo{person}{Quan Geng}, \bibinfo{person}{David Simcha}, \bibinfo{person}{Felix Chern}, {and} \bibinfo{person}{Sanjiv Kumar}.} \bibinfo{year}{2020}\natexlab{}.
\newblock \showarticletitle{Accelerating large-scale inference with anisotropic vector quantization}. In \bibinfo{booktitle}{\emph{ICML}}. PMLR, \bibinfo{pages}{3887--3896}.
\newblock


\bibitem[Hart et~al\mbox{.}(1968)]%
        {A*}
\bibfield{author}{\bibinfo{person}{Peter~E Hart}, \bibinfo{person}{Nils~J Nilsson}, {and} \bibinfo{person}{Bertram Raphael}.} \bibinfo{year}{1968}\natexlab{}.
\newblock \showarticletitle{A formal basis for the heuristic determination of minimum cost paths}.
\newblock \bibinfo{journal}{\emph{IEEE transactions on Systems Science and Cybernetics}} \bibinfo{volume}{4}, \bibinfo{number}{2} (\bibinfo{year}{1968}), \bibinfo{pages}{100--107}.
\newblock


\bibitem[Huang et~al\mbox{.}(2015)]%
        {QALSH}
\bibfield{author}{\bibinfo{person}{Qiang Huang}, \bibinfo{person}{Jianlin Feng}, \bibinfo{person}{Yikai Zhang}, \bibinfo{person}{Qiong Fang}, {and} \bibinfo{person}{Wilfred Ng}.} \bibinfo{year}{2015}\natexlab{}.
\newblock \showarticletitle{Query-aware locality-sensitive hashing for approximate nearest neighbor search}.
\newblock \bibinfo{journal}{\emph{PVLDB}} \bibinfo{volume}{9}, \bibinfo{number}{1} (\bibinfo{year}{2015}), \bibinfo{pages}{1--12}.
\newblock


\bibitem[Indyk and Motwani(1998)]%
        {curseOfDim}
\bibfield{author}{\bibinfo{person}{Piotr Indyk} {and} \bibinfo{person}{Rajeev Motwani}.} \bibinfo{year}{1998}\natexlab{}.
\newblock \showarticletitle{Approximate nearest neighbors: towards removing the curse of dimensionality}. In \bibinfo{booktitle}{\emph{Proceedings of the thirtieth annual ACM symposium on Theory of computing}}. \bibinfo{pages}{604--613}.
\newblock


\bibitem[Jayaram~Subramanya et~al\mbox{.}(2019)]%
        {diskann}
\bibfield{author}{\bibinfo{person}{Suhas Jayaram~Subramanya}, \bibinfo{person}{Fnu Devvrit}, \bibinfo{person}{Harsha~Vardhan Simhadri}, \bibinfo{person}{Ravishankar Krishnawamy}, {and} \bibinfo{person}{Rohan Kadekodi}.} \bibinfo{year}{2019}\natexlab{}.
\newblock \showarticletitle{Diskann: Fast accurate billion-point nearest neighbor search on a single node}.
\newblock \bibinfo{journal}{\emph{Advances in Neural Information Processing Systems}}  \bibinfo{volume}{32} (\bibinfo{year}{2019}).
\newblock


\bibitem[Jegou et~al\mbox{.}(2010)]%
        {pq}
\bibfield{author}{\bibinfo{person}{Herve Jegou}, \bibinfo{person}{Matthijs Douze}, {and} \bibinfo{person}{Cordelia Schmid}.} \bibinfo{year}{2010}\natexlab{}.
\newblock \showarticletitle{Product quantization for nearest neighbor search}.
\newblock \bibinfo{journal}{\emph{IEEE TPAMI}} \bibinfo{volume}{33}, \bibinfo{number}{1} (\bibinfo{year}{2010}), \bibinfo{pages}{117--128}.
\newblock


\bibitem[Krishna and Murty(1999)]%
        {K-means}
\bibfield{author}{\bibinfo{person}{K Krishna} {and} \bibinfo{person}{M~Narasimha Murty}.} \bibinfo{year}{1999}\natexlab{}.
\newblock \showarticletitle{Genetic K-means algorithm}.
\newblock \bibinfo{journal}{\emph{IEEE Transactions on Systems, Man, and Cybernetics, Part B (Cybernetics)}} \bibinfo{volume}{29}, \bibinfo{number}{3} (\bibinfo{year}{1999}), \bibinfo{pages}{433--439}.
\newblock


\bibitem[Kryszkiewicz and Lasek(2010)]%
        {ForCluster1}
\bibfield{author}{\bibinfo{person}{Marzena Kryszkiewicz} {and} \bibinfo{person}{Piotr Lasek}.} \bibinfo{year}{2010}\natexlab{}.
\newblock \showarticletitle{TI-DBSCAN: Clustering with DBSCAN by Means of the Triangle Inequality}. In \bibinfo{booktitle}{\emph{International Conference on Rough Sets and Current Trends in Computing}}. Springer, \bibinfo{pages}{60--69}.
\newblock


\bibitem[Kuhlman et~al\mbox{.}(2017)]%
        {Tri5}
\bibfield{author}{\bibinfo{person}{Caitlin Kuhlman}, \bibinfo{person}{Yizhou Yan}, \bibinfo{person}{Lei Cao}, {and} \bibinfo{person}{Elke Rundensteiner}.} \bibinfo{year}{2017}\natexlab{}.
\newblock \showarticletitle{Pivot-based distributed k-nearest neighbor mining}. In \bibinfo{booktitle}{\emph{Machine Learning and Knowledge Discovery in Databases: European Conference, ECML PKDD 2017}}. Springer, \bibinfo{pages}{843--860}.
\newblock


\bibitem[Li et~al\mbox{.}(2020)]%
        {earlytermination}
\bibfield{author}{\bibinfo{person}{Conglong Li}, \bibinfo{person}{Minjia Zhang}, \bibinfo{person}{David~G Andersen}, {and} \bibinfo{person}{Yuxiong He}.} \bibinfo{year}{2020}\natexlab{}.
\newblock \showarticletitle{Improving approximate nearest neighbor search through learned adaptive early termination}. In \bibinfo{booktitle}{\emph{ACM SIGMOD}}. \bibinfo{pages}{2539--2554}.
\newblock


\bibitem[Li et~al\mbox{.}(2018)]%
        {learnToHash}
\bibfield{author}{\bibinfo{person}{Jinfeng Li}, \bibinfo{person}{Xiao Yan}, \bibinfo{person}{Jian Zhang}, \bibinfo{person}{An Xu}, \bibinfo{person}{James Cheng}, \bibinfo{person}{Jie Liu}, \bibinfo{person}{Kelvin~KW Ng}, {and} \bibinfo{person}{Ti-chung Cheng}.} \bibinfo{year}{2018}\natexlab{}.
\newblock \showarticletitle{A general and efficient querying method for learning to hash}. In \bibinfo{booktitle}{\emph{ACM SIGMOD}}. \bibinfo{pages}{1333--1347}.
\newblock


\bibitem[Li et~al\mbox{.}(2019)]%
        {kANNSurvey}
\bibfield{author}{\bibinfo{person}{Wen Li}, \bibinfo{person}{Ying Zhang}, \bibinfo{person}{Yifang Sun}, \bibinfo{person}{Wei Wang}, \bibinfo{person}{Mingjie Li}, \bibinfo{person}{Wenjie Zhang}, {and} \bibinfo{person}{Xuemin Lin}.} \bibinfo{year}{2019}\natexlab{}.
\newblock \showarticletitle{Approximate nearest neighbor search on high dimensional data—experiments, analyses, and improvement}.
\newblock \bibinfo{journal}{\emph{TKDE}} \bibinfo{volume}{32}, \bibinfo{number}{8} (\bibinfo{year}{2019}), \bibinfo{pages}{1475--1488}.
\newblock


\bibitem[Lu et~al\mbox{.}(2021)]%
        {HVS}
\bibfield{author}{\bibinfo{person}{Kejing Lu}, \bibinfo{person}{Mineichi Kudo}, \bibinfo{person}{Chuan Xiao}, {and} \bibinfo{person}{Yoshiharu Ishikawa}.} \bibinfo{year}{2021}\natexlab{}.
\newblock \showarticletitle{HVS: hierarchical graph structure based on voronoi diagrams for solving approximate nearest neighbor search}.
\newblock \bibinfo{journal}{\emph{PVLDB}} \bibinfo{volume}{15}, \bibinfo{number}{2} (\bibinfo{year}{2021}), \bibinfo{pages}{246--258}.
\newblock


\bibitem[Lu et~al\mbox{.}(2012)]%
        {Tri2}
\bibfield{author}{\bibinfo{person}{Wei Lu}, \bibinfo{person}{Yanyan Shen}, \bibinfo{person}{Su Chen}, {and} \bibinfo{person}{Beng~Chin Ooi}.} \bibinfo{year}{2012}\natexlab{}.
\newblock \showarticletitle{Efficient processing of k nearest neighbor joins using mapreduce}.
\newblock \bibinfo{journal}{\emph{PVLDB}} (\bibinfo{year}{2012}).
\newblock


\bibitem[Malkov and Yashunin(2018)]%
        {HNSW}
\bibfield{author}{\bibinfo{person}{Yu~A Malkov} {and} \bibinfo{person}{Dmitry~A Yashunin}.} \bibinfo{year}{2018}\natexlab{}.
\newblock \showarticletitle{Efficient and robust approximate nearest neighbor search using hierarchical navigable small world graphs}.
\newblock \bibinfo{journal}{\emph{IEEE TPAMI}} \bibinfo{volume}{42}, \bibinfo{number}{4} (\bibinfo{year}{2018}), \bibinfo{pages}{824--836}.
\newblock


\bibitem[Muja and Lowe(2014)]%
        {ScalableNN}
\bibfield{author}{\bibinfo{person}{Marius Muja} {and} \bibinfo{person}{David~G Lowe}.} \bibinfo{year}{2014}\natexlab{}.
\newblock \showarticletitle{Scalable nearest neighbor algorithms for high dimensional data}.
\newblock \bibinfo{journal}{\emph{IEEE TPAMI}} \bibinfo{volume}{36}, \bibinfo{number}{11} (\bibinfo{year}{2014}), \bibinfo{pages}{2227--2240}.
\newblock


\bibitem[OpenAI(2022)]%
        {chatGPT}
\bibfield{author}{\bibinfo{person}{OpenAI}.} \bibinfo{year}{2022}\natexlab{}.
\newblock \bibinfo{title}{ChatGPT}.
\newblock \bibinfo{howpublished}{\url{https://chat.openai.com}}.
\newblock
\newblock
\shownote{Accessed: 2025-03-19}.


\bibitem[Peng et~al\mbox{.}(2023)]%
        {tau-MNG}
\bibfield{author}{\bibinfo{person}{Yun Peng}, \bibinfo{person}{Byron Choi}, \bibinfo{person}{Tsz~Nam Chan}, \bibinfo{person}{Jianye Yang}, {and} \bibinfo{person}{Jianliang Xu}.} \bibinfo{year}{2023}\natexlab{}.
\newblock \showarticletitle{Efficient Approximate Nearest Neighbor Search in Multi-dimensional Databases}.
\newblock \bibinfo{journal}{\emph{ACM SIGMOD}} \bibinfo{volume}{1}, \bibinfo{number}{1} (\bibinfo{year}{2023}), \bibinfo{pages}{1--27}.
\newblock


\bibitem[Qiao et~al\mbox{.}(2012)]%
        {Tri1}
\bibfield{author}{\bibinfo{person}{Miao Qiao}, \bibinfo{person}{Hong Cheng}, \bibinfo{person}{Lijun Chang}, {and} \bibinfo{person}{Jeffrey~Xu Yu}.} \bibinfo{year}{2012}\natexlab{}.
\newblock \showarticletitle{Approximate shortest distance computing: A query-dependent local landmark scheme}.
\newblock \bibinfo{journal}{\emph{IEEE TKDE}} \bibinfo{volume}{26}, \bibinfo{number}{1} (\bibinfo{year}{2012}), \bibinfo{pages}{55--68}.
\newblock


\bibitem[Radovanovic et~al\mbox{.}(2010)]%
        {disConcen1}
\bibfield{author}{\bibinfo{person}{Milos Radovanovic}, \bibinfo{person}{Alexandros Nanopoulos}, {and} \bibinfo{person}{Mirjana Ivanovic}.} \bibinfo{year}{2010}\natexlab{}.
\newblock \showarticletitle{Hubs in space: Popular nearest neighbors in high-dimensional data}.
\newblock \bibinfo{journal}{\emph{Journal of Machine Learning Research}} \bibinfo{volume}{11}, \bibinfo{number}{sept} (\bibinfo{year}{2010}), \bibinfo{pages}{2487--2531}.
\newblock


\bibitem[Rui et~al\mbox{.}(1999)]%
        {ImageRetrieval}
\bibfield{author}{\bibinfo{person}{Yong Rui}, \bibinfo{person}{Thomas~S Huang}, {and} \bibinfo{person}{Shih-Fu Chang}.} \bibinfo{year}{1999}\natexlab{}.
\newblock \showarticletitle{Image retrieval: Current techniques, promising directions, and open issues}.
\newblock \bibinfo{journal}{\emph{Journal of visual communication and image representation}} \bibinfo{volume}{10}, \bibinfo{number}{1} (\bibinfo{year}{1999}), \bibinfo{pages}{39--62}.
\newblock


\bibitem[Schroff et~al\mbox{.}(2015)]%
        {DM}
\bibfield{author}{\bibinfo{person}{Florian Schroff}, \bibinfo{person}{Dmitry Kalenichenko}, {and} \bibinfo{person}{James Philbin}.} \bibinfo{year}{2015}\natexlab{}.
\newblock \showarticletitle{Facenet: A unified embedding for face recognition and clustering}. In \bibinfo{booktitle}{\emph{Proceedings of the IEEE conference on computer vision and pattern recognition}}. \bibinfo{pages}{815--823}.
\newblock


\bibitem[Stuart and Ord(1999)]%
        {theory1}
\bibfield{author}{\bibinfo{person}{A. Stuart} {and} \bibinfo{person}{J.K. Ord}.} \bibinfo{year}{1999}\natexlab{}.
\newblock \bibinfo{booktitle}{\emph{Kendall's Advanced Theory of Statistics, Vol. 2A: Classical Inference \& the Linear Model}}.
\newblock \bibinfo{publisher}{Oxford University Press}. 865 pages.
\newblock


\bibitem[Sun et~al\mbox{.}(2014)]%
        {srs}
\bibfield{author}{\bibinfo{person}{Yifang Sun}, \bibinfo{person}{Wei Wang}, \bibinfo{person}{Jianbin Qin}, \bibinfo{person}{Ying Zhang}, {and} \bibinfo{person}{Xuemin Lin}.} \bibinfo{year}{2014}\natexlab{}.
\newblock \showarticletitle{SRS: solving c-approximate nearest neighbor queries in high dimensional euclidean space with a tiny index}.
\newblock \bibinfo{journal}{\emph{PVLDB}} (\bibinfo{year}{2014}).
\newblock


\bibitem[Tao et~al\mbox{.}(2010)]%
        {TODS}
\bibfield{author}{\bibinfo{person}{Yufei Tao}, \bibinfo{person}{Ke Yi}, \bibinfo{person}{Cheng Sheng}, {and} \bibinfo{person}{Panos Kalnis}.} \bibinfo{year}{2010}\natexlab{}.
\newblock \showarticletitle{Efficient and accurate nearest neighbor and closest pair search in high-dimensional space}.
\newblock \bibinfo{journal}{\emph{ACM TODS}} \bibinfo{volume}{35}, \bibinfo{number}{3} (\bibinfo{year}{2010}), \bibinfo{pages}{1--46}.
\newblock


\bibitem[Torgersen(1972)]%
        {theory2}
\bibfield{author}{\bibinfo{person}{E. Torgersen}.} \bibinfo{year}{1972}\natexlab{}.
\newblock \showarticletitle{Analysis of Noncentral Distributions}.
\newblock \bibinfo{journal}{\emph{Springer}} (\bibinfo{year}{1972}).
\newblock


\bibitem[Wang et~al\mbox{.}(2021b)]%
        {Milvus}
\bibfield{author}{\bibinfo{person}{Jianguo Wang}, \bibinfo{person}{Xiaomeng Yi}, \bibinfo{person}{Rentong Guo}, \bibinfo{person}{Hai Jin}, \bibinfo{person}{Peng Xu}, \bibinfo{person}{Shengjun Li}, \bibinfo{person}{Xiangyu Wang}, \bibinfo{person}{Xiangzhou Guo}, \bibinfo{person}{Chengming Li}, \bibinfo{person}{Xiaohai Xu}, {et~al\mbox{.}}} \bibinfo{year}{2021}\natexlab{b}.
\newblock \showarticletitle{Milvus: A purpose-built vector data management system}. In \bibinfo{booktitle}{\emph{ACM SIGMOD}}. \bibinfo{pages}{2614--2627}.
\newblock


\bibitem[Wang et~al\mbox{.}(2022)]%
        {NHQ}
\bibfield{author}{\bibinfo{person}{Mengzhao Wang}, \bibinfo{person}{Lingwei Lv}, \bibinfo{person}{Xiaoliang Xu}, \bibinfo{person}{Yuxiang Wang}, \bibinfo{person}{Qiang Yue}, {and} \bibinfo{person}{Jiongkang Ni}.} \bibinfo{year}{2022}\natexlab{}.
\newblock \showarticletitle{Navigable Proximity Graph-Driven Native Hybrid Queries with Structured and Unstructured Constraints}.
\newblock \bibinfo{journal}{\emph{arXiv preprint arXiv:2203.13601}} (\bibinfo{year}{2022}).
\newblock


\bibitem[Wang et~al\mbox{.}(2024b)]%
        {starling}
\bibfield{author}{\bibinfo{person}{Mengzhao Wang}, \bibinfo{person}{Weizhi Xu}, \bibinfo{person}{Xiaomeng Yi}, \bibinfo{person}{Songlin Wu}, \bibinfo{person}{Zhangyang Peng}, \bibinfo{person}{Xiangyu Ke}, \bibinfo{person}{Yunjun Gao}, \bibinfo{person}{Xiaoliang Xu}, \bibinfo{person}{Rentong Guo}, {and} \bibinfo{person}{Charles Xie}.} \bibinfo{year}{2024}\natexlab{b}.
\newblock \showarticletitle{Starling: An I/O-Efficient Disk-Resident Graph Index Framework for High-Dimensional Vector Similarity Search on Data Segment}.
\newblock \bibinfo{journal}{\emph{ACM SIGMOD}} \bibinfo{volume}{2}, \bibinfo{number}{1} (\bibinfo{year}{2024}), \bibinfo{pages}{1--27}.
\newblock


\bibitem[Wang et~al\mbox{.}(2021a)]%
        {ForCluster3}
\bibfield{author}{\bibinfo{person}{Yuke Wang}, \bibinfo{person}{Boyuan Feng}, \bibinfo{person}{Gushu Li}, \bibinfo{person}{Georgios Tzimpragos}, \bibinfo{person}{Lei Deng}, \bibinfo{person}{Yuan Xie}, {and} \bibinfo{person}{Yufei Ding}.} \bibinfo{year}{2021}\natexlab{a}.
\newblock \showarticletitle{TiAcc: Triangle-inequality based Hardware Accelerator for K-means on FPGAs}. In \bibinfo{booktitle}{\emph{2021 IEEE/ACM 21st International Symposium on Cluster, Cloud and Internet Computing (CCGrid)}}. IEEE, \bibinfo{pages}{133--142}.
\newblock


\bibitem[Wang et~al\mbox{.}(2024a)]%
        {DCOBenchmark}
\bibfield{author}{\bibinfo{person}{Zeyu Wang}, \bibinfo{person}{Haoran Xiong}, \bibinfo{person}{Zhenying He}, \bibinfo{person}{Peng Wang}, {et~al\mbox{.}}} \bibinfo{year}{2024}\natexlab{a}.
\newblock \showarticletitle{Distance Comparison Operators for Approximate Nearest Neighbor Search: Exploration and Benchmark}.
\newblock \bibinfo{journal}{\emph{arXiv preprint arXiv:2403.13491}} (\bibinfo{year}{2024}).
\newblock


\bibitem[Wang et~al\mbox{.}(2025)]%
        {Fudistlib}
\bibfield{author}{\bibinfo{person}{Zeyu Wang}, \bibinfo{person}{Haoran Xiong}, \bibinfo{person}{Zhenying He}, \bibinfo{person}{Peng Wang}, {et~al\mbox{.}}} \bibinfo{year}{2025}\natexlab{}.
\newblock \bibinfo{title}{Fudist}.
\newblock \bibinfo{howpublished}{\url{https://github.com/CaucherWang/Fudist}}.
\newblock


\bibitem[Wei et~al\mbox{.}(2020)]%
        {ADBV}
\bibfield{author}{\bibinfo{person}{Chuangxian Wei}, \bibinfo{person}{Bin Wu}, \bibinfo{person}{Sheng Wang}, \bibinfo{person}{Renjie Lou}, \bibinfo{person}{Chaoqun Zhan}, \bibinfo{person}{Feifei Li}, {and} \bibinfo{person}{Yuanzhe Cai}.} \bibinfo{year}{2020}\natexlab{}.
\newblock \showarticletitle{Analyticdb-v: A hybrid analytical engine towards query fusion for structured and unstructured data}.
\newblock \bibinfo{journal}{\emph{PVLDB}} \bibinfo{volume}{13}, \bibinfo{number}{12} (\bibinfo{year}{2020}), \bibinfo{pages}{3152--3165}.
\newblock


\bibitem[Xu et~al\mbox{.}(2025a)]%
        {Tribaselib}
\bibfield{author}{\bibinfo{person}{Qian Xu}, \bibinfo{person}{Juan Yang}, \bibinfo{person}{Feng Zhang}, \bibinfo{person}{Junda Pan}, \bibinfo{person}{Kang Chen}, \bibinfo{person}{Youren Shen}, \bibinfo{person}{Amelie~Chi Zhou}, {and} \bibinfo{person}{Xiaoyong Du}.} \bibinfo{year}{2025}\natexlab{a}.
\newblock \bibinfo{title}{Tribase}.
\newblock \bibinfo{howpublished}{\url{https://github.com/xuqianmamba/Tribase}}.
\newblock


\bibitem[Xu et~al\mbox{.}(2025b)]%
        {Tribase}
\bibfield{author}{\bibinfo{person}{Qian Xu}, \bibinfo{person}{Juan Yang}, \bibinfo{person}{Feng Zhang}, \bibinfo{person}{Junda Pan}, \bibinfo{person}{Kang Chen}, \bibinfo{person}{Youren Shen}, \bibinfo{person}{Amelie~Chi Zhou}, {and} \bibinfo{person}{Xiaoyong Du}.} \bibinfo{year}{2025}\natexlab{b}.
\newblock \showarticletitle{Tribase: A Vector Data Query Engine for Reliable and Lossless Pruning Compression using Triangle Inequalities}.
\newblock \bibinfo{journal}{\emph{ACM SIGMOD}} \bibinfo{volume}{3}, \bibinfo{number}{1} (\bibinfo{year}{2025}), \bibinfo{pages}{1--28}.
\newblock


\bibitem[Yao et~al\mbox{.}(2011)]%
        {yao2011}
\bibfield{author}{\bibinfo{person}{Bin Yao}, \bibinfo{person}{Mingwang Tang}, {and} \bibinfo{person}{Feifei Li}.} \bibinfo{year}{2011}\natexlab{}.
\newblock \showarticletitle{Multi-approximate-keyword routing in GIS data}. In \bibinfo{booktitle}{\emph{Proceedings of the 19th ACM SIGSPATIAL international conference on advances in geographic information systems}}. \bibinfo{pages}{201--210}.
\newblock


\bibitem[Zhang et~al\mbox{.}(2023)]%
        {vbase}
\bibfield{author}{\bibinfo{person}{Qianxi Zhang}, \bibinfo{person}{Shuotao Xu}, \bibinfo{person}{Qi Chen}, \bibinfo{person}{Guoxin Sui}, \bibinfo{person}{Jiadong Xie}, \bibinfo{person}{Zhizhen Cai}, \bibinfo{person}{Yaoqi Chen}, \bibinfo{person}{Yinxuan He}, \bibinfo{person}{Yuqing Yang}, \bibinfo{person}{Fan Yang}, {et~al\mbox{.}}} \bibinfo{year}{2023}\natexlab{}.
\newblock \showarticletitle{$VBASE$: Unifying Online Vector Similarity Search and Relational Queries via Relaxed Monotonicity}. In \bibinfo{booktitle}{\emph{OSDI}}.
\newblock


\bibitem[Zhao et~al\mbox{.}(2023)]%
        {LSH-APG}
\bibfield{author}{\bibinfo{person}{Xi Zhao}, \bibinfo{person}{Yao Tian}, \bibinfo{person}{Kai Huang}, \bibinfo{person}{Bolong Zheng}, {and} \bibinfo{person}{Xiaofang Zhou}.} \bibinfo{year}{2023}\natexlab{}.
\newblock \showarticletitle{Towards Efficient Index Construction and Approximate Nearest Neighbor Search in High-Dimensional Spaces}.
\newblock \bibinfo{journal}{\emph{PVLDB}} \bibinfo{volume}{16}, \bibinfo{number}{8} (\bibinfo{year}{2023}), \bibinfo{pages}{1979--1991}.
\newblock


\bibitem[Zhao et~al\mbox{.}(2021)]%
        {PandaDB}
\bibfield{author}{\bibinfo{person}{Zihao Zhao}, \bibinfo{person}{Zhihong Shen}, {and} \bibinfo{person}{Mingjie Tang}.} \bibinfo{year}{2021}\natexlab{}.
\newblock \showarticletitle{PandaDB: Understanding Unstructured Data in Graph Database}.
\newblock \bibinfo{journal}{\emph{arXiv preprint arXiv:2107.01963}} (\bibinfo{year}{2021}).
\newblock


\bibitem[Zuo et~al\mbox{.}(2024)]%
        {SeRF}
\bibfield{author}{\bibinfo{person}{Chaoji Zuo}, \bibinfo{person}{Miao Qiao}, \bibinfo{person}{Wenchao Zhou}, \bibinfo{person}{Feifei Li}, {and} \bibinfo{person}{Dong Deng}.} \bibinfo{year}{2024}\natexlab{}.
\newblock \showarticletitle{SeRF: Segment Graph for Range-Filtering Approximate Nearest Neighbor Search}.
\newblock \bibinfo{journal}{\emph{ACM SIGMOD}} \bibinfo{volume}{2}, \bibinfo{number}{1} (\bibinfo{year}{2024}), \bibinfo{pages}{1--26}.
\newblock


\end{thebibliography}

\end{document}
\endinput